\documentclass[aps,twocolumn,nofootinbib,preprintnumbers,superscriptaddress]{revtex4-2}

\usepackage{color}
\usepackage{bbm}
\usepackage{tikz}

\usepackage[
pagebackref=false,
colorlinks=true,
linkcolor=blue,
urlcolor=blue,
filecolor=black,
citecolor=red,
pdfstartview=FitV,
pdftitle={},
pdfauthor={},
pdfsubject={},
pdfkeywords={},
pdfpagemode=None,
bookmarksopen=true
]{hyperref}

\usepackage{amsmath,amssymb}
\usepackage{amsthm}

\usepackage{physics}
\usepackage{graphicx}
\usepackage{bm}
\usepackage{bbm}
\usepackage{color} 
\usepackage{srcltx} 
\bibliographystyle{apsrev4-1}
\usepackage{newtxtext,newtxmath}
\usepackage{cleveref}

\allowdisplaybreaks
\usepackage{quantikz}
\usepackage{xcolor}
\newtheorem{theorem}{Theorem}[section]
\newtheorem{corollary}{Corollary}[theorem]
\newtheorem{lemma}[theorem]{Lemma}
\usepackage{comment}
\usepackage{varwidth}

\newcommand{\nb}[1]{\color{blue}}

\newcommand{\hl}[1]{\color{magenta}}

\begin{document}
\title{Canonical Partition Function on a Quantum Computer through Trotter Interpolation}
\date{\today}

\author{Taozhi Guo}
\thanks{taozhig@princeton.edu. This work was completed while the first author was an intern at Fujitsu Research of America, Inc.}
\affiliation{Department of Physics, Princeton University, Princeton, New Jersey 08544, USA}

\author{Gumaro Rendon} 
\thanks{grendon@fujitsu.com}
\affiliation{Fujitsu Research of America, Inc, Santa Clara, CA 95054, USA}

\author{Rutuja Kshirsagar}
\thanks{rkshirsagar@fujitsu.com}

\affiliation{Fujitsu Research of America, Inc, Santa Clara, CA 95054, USA}

\begin{abstract}

  In this work, we present a Gibbs state observable estimation algorithm based on Trotter interpolation, which reaches a state-of-the-art quantum computational cost of $ \tilde{O}(\beta \log{1/\epsilon})$. Our approach saves $\log(\Gamma)$ ancilla qubits compared with the qubitization-based methods for Hamiltonian with $\Gamma$ stages. To provide a robust assessment of our approach, we benchmark our results against state-of-the-art methodology using the SYK model as a testbed. Our method provides an efficient alternative method for Gibbs-state accessing based on Trotterization in the context of quantum state preparation and estimation of thermal observables.
\end{abstract}
\maketitle

\tableofcontents

\section{Introduction}

Traditionally, classical Gibbs sampling or a Gibbs sampler is a Markov chain Monte Carlo (MCMC) algorithm \cite{Duane:1987de, Kennedy1990} used as a means of statistical inference. Recently, it has also been used as an efficient technique for sampling the posterior distribution of neural network weights in Bayesian deep learning by adding noise at each activation layer\cite{Piccioli_2024}, which offers a promising alternative for deep learning inference.
Quantum Gibbs states (also known as thermal states) describes quantum state in a finite temperature and it can be used for a variety of cases such as the study of phase transitions in many-body quantum systems, in quantum machine learning ~\cite{Kieferova2017, Biamonte2017}, quantum optimization~\cite{Somma2008}, in Semi-definite programming~\cite{SDP_Brandao_2016,van_Apeldoorn_2020} and also in the study of open quantum systems~\cite{Poulin2009}. Particularly, the Gibbs state is important for quantum many-body physics as the finite temperature properties of many such models are not well understood. 

One of the most important many-body physics models is the SYK model \cite{kitaev_15,PhysRevLett.70.3339} which describes the random $q$ couplings between $n_{\rm Majorana}$ Majorana Fermions. The model has not only attracted significant attention in the field of condensed matter physics, such as high temperature superconductors \cite{Patel_2023, Sachdev_2010,chowdhury2022sachdev}, but also in high energy physics, particularly in holography \cite{Maldacena_2016,Cotler_2017} as the SYK model has a holographic dual. A feature of the SYK model is the maximally quantum chaotic as large $n_{\rm Majorana}$ limit \cite{nosaka2020quantumchaosthermodynamicsblack, Cotler_2017} which serves as an important tool to understand the chaotic behavior of black holes. As a toy model for quantum gravity, it is crucial to study the dynamical and thermal properties of the model. 
So far, it has been challenging to study the model as it is not analytically solvable in most cases. 
Thus, quantum computers could provide an efficient method for simulating the SYK model at finite temperature with Gibbs states.

Although there has been a demonstrated quantum speed-up over general Gibbs state preparation/sampling techniques, this is limited to a quadratic speed-up rooted in the energy-time uncertainty limit. Some approaches are inspired by nature in that they use an ancillary system \cite{Poulin2009, chowdhury2016quantumalgorithmsgibbssampling} or a master equation (Linbladian)\cite{chen2023quantum,chen2023efficientexactnoncommutativequantum} to simulate the thermalization by contact with a heat bath/sink. They have proven polynomial complexity bounds with respect to the mixing time. However, the mixing time is polynomial with respect to the system size or polylog in the Hilbert space size for a general system has not been demonstrated ~\cite{Poulin2009}. 
Ref.~\cite{chen2023quantum} introduces a method using continuous-time quantum Gibbs samplers to efficiently prepare thermal states. By leveraging Lindbladian dynamics, the authors prove finite-time thermalization and construct detailed-balanced processes for general quantum systems. This work offers a quantum analog to classical Monte Carlo methods, advancing scalable quantum simulations. The hope for these approaches is that for physically inspired systems, this time is polynomial in size as demonstrated by nature. Further shown in Ref.~\cite{chen2023efficientexactnoncommutativequantum}, by leveraging quantum phase estimation, the method provides accurate preparation of Gibbs states, addressing common thermalization issues and advancing quantum thermal simulations.
Ref.~\cite{chowdhury2016quantumalgorithmsgibbssampling} explores the role of quantum computing in accelerating Gibbs sampling and hitting-time estimation and shows that quantum phase estimation can achieve quadratic speedups in preparing Gibbs states and estimating hitting times, potentially revolutionizing MCMC convergence and enhancing quantum simulations.

Some other method for thermal Gibbs sampling includes Quantum annealing \cite{PhysRevApplied.17.044046}, which is adapted to generate samples from Gibbs distributions, while addressing challenges like noise and imperfections in quantum hardware. 
Combining with phase estimation \cite{Bergamaschi_2024}, such an approach offers polynomial-time solutions to problems that would otherwise require exponential time for classical computation. 
The exponential advantage over classical methods could be extended to to "O(1)-local" Hamiltonians with limited particle interactions \cite{rajakumar2024gibbssamplinggivesquantum}. 
The approach of truncating the cluster expansion of the Gibbs state \cite{Eassa_2024, 10.1063/1.1705066, Oitmaa_Hamer_Zheng_2006} could also provide a good approximation of the Gibbs state while reducing computational complexity and improving sampling accuracy in systems with short-range interactions. 
Quantum Singular Value Transformation (QSVT)\cite{Gily_n_2019} could also be used to prepare Gibbs states by leveraging techniques of block-encoding and quantum signal processing. There have been some other efforts for implementing the imaginary time evolution operator through a semi-classical linear combination of unitaries expansion\cite{matsumoto2024quantummanybodysimulationfinitetemperature} for finite temperature simulation, this with lower quantum overheads than the methods that rely on quantum-signal processing which promise to be implementable on earlier quantum devices, but at the cost of losing the famous quadratic advantage. 

In this work, we focus on providing an alternative to state-of-the-art methods that rely on block-encoding of the Hamiltonian, for example, \cite{Gily_n_2019}, which work for a general Hamiltonian. Our contribution provides a more modest quadratic improvement for estimating Gibbs state observables with a quantum computational cost of $  \tilde{O}(\log{1/\epsilon})$ based on Chebyshev interpolation for the Trotterization, which is proper of these types of methods. \textit{Futher, we show that compared with qubitization-based methods, our approach achieves ancillary qubits saving of $\log{\Gamma}$ qubits by bypassing the usage of the quantum walk operator}. 

The rest of the paper is organized as follows: Section \ref{sec:alg_summary} provides an overview of our algorithm. In Section \ref{sec:estimates}, we outline the fundamentals of our approximations and estimate the required polynomial size for achieving targeted error thresholds. Section \ref{sec:QSP_for_Trotterized_Ev_op} demonstrates the application of generalized quantum signal processing techniques from \cite{motlagh2024generalizedquantumsignalprocessing} to our Trotterized unitary operator. Section \ref{sec:cost_analysis} offers a cost analysis of the various steps involved in our algorithm. In Section \ref{sec:numerical_results}, we apply our analysis to the SYK model and show the auxiliary qubits saved by our method. Finally, Section \ref{sec:conclusions} presents the conclusions of our work and future directions.

\section{The Algorithm\label{sec:alg_summary}}

We want to estimate the canonical partition function
\begin{align}
    \frac{Z(\beta)}{N} = \frac{{\rm Tr} \left(e^{-\beta H}\right)}{N},
\end{align}
where the Hamiltonian $H$ on a system of $n$ qubits ($N=2^n$), can be decomposed into a sum of $\Gamma$ terms:
\begin{align}
    H = \sum^{\Gamma}_{\gamma=1} H_{\gamma}
\end{align}
Notice that both the Hamiltonian and decomposition are specified, as the decomposition is not unique, but the assumption is that each individual $H_{\gamma}$ is fast-forwardable, like, for example, any term proportional to a Pauli-string. Throughout this work, $\sum_\gamma^{\Gamma}\|H_{\gamma}\|\leq 1$ is assumed.

The algorithm derivation can be summarized in the following steps:
\begin{itemize}
    \item We approximate the Hamiltonian evolution through a $p$th-order product formula which corresponds to the exact evolution operator for an effective Hamiltonian $\tilde{H}_{p}(s t)$ for some Trotter step size $(s t)$.
    \item  From these evolution operators, we construct an implementation of the operator $e^{-i \beta \tilde{H}_p (s t)}$, for an inverse temperature $\beta$.
    \item For a set of Trotter step sizes $ t s_k $, we estimate the trace $Z(\beta,s_k)={ \rm Tr} \left(e^{-i \beta \tilde{H}_p (s_k t)}\right)$ on a quantum device.
    \item From these estimates, we use Chebyshev interpolation to obtain the partition function at $s=0$.
\end{itemize}

Standard examples of product formulas for other Hamiltonian $H = \sum_{\gamma = 1}^{\Gamma} H_{\gamma}$ include the Trotter first and second order formula following Ref.~\cite{10.1063/1.529425}. 
\begin{align}
S_{1} (t) := \prod^{\Gamma}_{\gamma=1} e^{i H_{\gamma} t}, \quad S_{2} (t) :=  S_1 (t/2) S_1(-t/2)^{\dagger}
\end{align}
and more generally, the order $2l$ symmetric Suzuki-Trotter formula, defined recursively as
\begin{align} \label{eq:symm_S2k}
S_{2l} (t) := [S_{2l-2}(u_l t)]^2 S_{2l-2,j}\left((1 - 4 u_l) t\right) [S_{2l-2}(u_l t)]^2
\end{align}
for every $l\in \mathbb{Z}_+\setminus\{1\}$, where $u_l := (4-4^{1/(2l-1)})^{-1}$. 

From these, one can define an effective Hamiltonian as 
\begin{align}
    \tilde{H}_{p}(s t) = \frac{\log\left(S_{p}(s t)\right)}{i s t },
\end{align}
where $\tilde{H}_{p}(s t)$ will be a single-valued analytic function of $s$ provided that $t \leq \pi$.

Our results can be summarized in the following short version of our main results theorem:
\begin{theorem}[Short Version]\label{thm:main_informal}
There exists an algorithm to estimate the canonical partition function for an n-qubit system with a Hamiltonian $H=\sum^{\Gamma}_\gamma H_\gamma$ for fast-forwardable $H_{\gamma}$ and $\| H \| \leq 1$, which has a quantum computational cost of:
\begin{align*}
     \tilde{O}\left(\beta \log{1/\epsilon}\left(\frac{\sqrt{\max_{s} Z(\beta,s)/2^n}}{\varepsilon}\right)\right),
\end{align*}
where $\epsilon$ is the \textbf{algorithmic error} and $\varepsilon$ is the \textbf{statistical error}. This, with the use of 2n+2 qubits and without the need for block-encoding ancillary qubits.
\end{theorem}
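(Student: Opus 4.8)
The plan is to turn the four bullet points above into three quantitatively controlled sub-tasks plus a register count. Fix the inverse temperature $\beta$ and a maximal real Trotter-step parameter $s_{\max}$ with $s_{\max}t\le\pi$. I would (i) show that knowing the normalized traces $Z(\beta,s_k)/2^n={\rm Tr}\!\big(e^{-i\beta\tilde H_p(s_k t)}\big)/2^n$ at $M=\tilde O(\log 1/\epsilon)$ Chebyshev nodes $s_k\in[-s_{\max},s_{\max}]$ suffices to recover $Z(\beta)/2^n={\rm Tr}(e^{-i\beta H})/2^n$ (the value at $s=0$; the physical thermal partition function is this with $\beta$ on the imaginary axis) to additive algorithmic error $\epsilon$; (ii) show that each unitary $e^{-i\beta\tilde H_p(s_k t)}$, and its controlled version, can be synthesized from the product formula $S_p(s_k t)$ using only one ancilla qubit; and (iii) show that each $Z(\beta,s_k)/2^n$ can be estimated to additive statistical error $\varepsilon$ using $\tilde O\!\big(\sqrt{\max_s Z(\beta,s)/2^n}\,/\,\varepsilon\big)$ coherent repetitions. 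The gate cost is the product of these counts, and $2n+2$ follows from bookkeeping the registers of (ii)--(iii). Since $\epsilon$ and $\varepsilon$ are dialed independently, it is enough to control each sub-task and then invoke that classical Chebyshev interpolation is a bounded linear functional of the node data (Lebesgue constant $O(\log M)$, absorbed into $\tilde O$).

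\textbf{Synthesizing the per-node unitary.} As each $H_\gamma$ is fast-forwardable, $S_p(s_k t)$ --- a product of $O(\Gamma)$ such factors, with a constant growing in the order $p$ --- is directly implementable, as is its controlled version. Since $s_k t\le\pi$, the effective Hamiltonian $\tilde H_p(s_k t)=\log\!\big(S_p(s_k t)\big)/(i s_k t)$ is Hermitian and single-valued, so $S_p(s_k t)=e^{i s_k t\,\tilde H_p(s_k t)}$ and the eigenphases of $S_p(s_k t)$ lie in an arc of half-width $s_k t\|\tilde H_p\|<\pi$. Thus $e^{-i\beta\tilde H_p(s_k t)}$ is the image of $S_p(s_k t)$ under the map $e^{i\phi}\mapsto e^{-i\beta\phi/(s_k t)}$ on that arc, which I would approximate to error $\epsilon$ by a Laurent polynomial in $S_p(s_k t)$ and $S_p(s_k t)^\dagger$ of degree $d=\tilde O(\beta\|H\|+\log 1/\epsilon)$ (Jacobi--Anger / standard QSP approximation bounds --- exactly the ``evolve for time $\beta$'' cost), then realize that polynomial via the generalized quantum signal processing of Sec.~\ref{sec:QSP_for_Trotterized_Ev_op} using a \emph{single} signal qubit. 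No block encoding of $H=\sum_\gamma H_\gamma$ is ever formed, which is precisely what removes the $\lceil\log_2\Gamma\rceil$ ancilla qubits that an LCU/qubitization block encoding of $H$ would require. The per-node subroutine therefore uses $\tilde O(\beta+\log 1/\epsilon)$ controlled calls to $S_p(s_k t)$ and one ancilla.

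\textbf{Interpolation analyticity, trace estimation, final count.} For (i) the key analytic input is that $\tilde H_p(st)=\log(S_p(st))/(ist)$ extends, for $|st|\le\pi$, to a single-valued operator-valued analytic function of complex $s$ on a Bernstein ellipse with foci $\pm s_{\max}$; here one uses that the spectrum of $S_p(st)$ stays off the logarithm's branch cut, together with the Trotter bound $\|\tilde H_p(st)-H\|=O\!\big((st\|H\|)^{p}\big)$ to control its size on the ellipse (for even-order $S_{2l}$ only even powers of $s$ enter, so the interpolation is in $s^2$). Hence $s\mapsto{\rm Tr}(e^{-i\beta\tilde H_p(st)})/2^n$ is analytic there, its Chebyshev coefficients decay geometrically at a rate fixed by the ellipse parameter (depending only on $s_{\max}/\pi$ and the bound on $\|\tilde H_p\|$ over the ellipse), and $M=\tilde O(\log 1/\epsilon)$ nodes make the interpolant at $s=0$ agree with $Z(\beta)/2^n$ to within $\epsilon$. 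For (iii) I would prepare the maximally entangled state $|\Phi^+\rangle$ across the $n$ system qubits and an $n$-qubit ancilla register so that $\langle\Phi^+|(I\otimes V)|\Phi^+\rangle={\rm Tr}(V)/2^n$, run a Hadamard test (one further qubit) on the generalized-QSP circuit $V=e^{-i\beta\tilde H_p(s_k t)}$ to read off $\mathrm{Re}$ and $\mathrm{Im}$, and boost it by amplitude estimation: since the amplitude it probes is $\Theta\!\big(\sqrt{Z(\beta,s_k)/2^n}\big)$, additive error $\varepsilon$ is reached in $\tilde O\!\big(\sqrt{\max_s Z(\beta,s)/2^n}\,/\,\varepsilon\big)$ repetitions, and these node errors pass through the bounded interpolation functional to error $\tilde O(\varepsilon)$ at $s=0$. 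Counting registers, $n$ (system) $+\,n$ (purification) $+\,1$ (generalized-QSP signal qubit) $+\,1$ (Hadamard/amplitude-estimation qubit) $=2n+2$, and multiplying the three counts yields $\tilde O\!\big(\beta\log(1/\epsilon)\,\sqrt{\max_s Z(\beta,s)/2^n}/\varepsilon\big)$.

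\textbf{Main obstacle.} The crux is the complex-analytic control in (i): one must prove that $\tilde H_p(st)$, hence $\|e^{-i\beta\tilde H_p(st)}\|$ and $|{\rm Tr}(e^{-i\beta\tilde H_p(st)})/2^n|$, stays bounded on a Bernstein ellipse with foci $\pm s_{\max}$, because for complex $s$ the formula $S_p(st)$ is no longer unitary and its spectrum can drift toward the branch cut of the matrix logarithm; this bound both certifies the geometric decay of the Chebyshev coefficients and fixes, through the ellipse parameter, the constant in the $\tilde O(\log 1/\epsilon)$ node count. A secondary difficulty arises when $\beta$ is taken off the real axis (the thermal case): $e^{-i\beta\tilde H_p(st)}$ must then be implemented as a block encoding with subnormalization $\lesssim e^{|\mathrm{Im}\,\beta|\,\|H\|}$, and that factor has to be tracked cleanly through the generalized-QSP synthesis and the amplitude-estimation step while verifying that the per-node statistical error stays controlled after passing through the interpolation functional.
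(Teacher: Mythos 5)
Your outline reproduces the paper's architecture step for step --- product formula $\to$ effective Hamiltonian $\tilde H_p(st)$, a one-ancilla GQSP polynomial of the Trotter unitary in place of a block-encoding (hence the $\log_2\Gamma$ qubit saving), trace extraction via a maximally entangled state plus amplitude estimation, Chebyshev interpolation to $s=0$, and the register count $n+n+1+1=2n+2$. But there are two substantive gaps. First, your main line synthesizes the \emph{real-time} unitary $e^{-i\beta\tilde H_p(s_kt)}$ via a Jacobi--Anger-type arc approximation and relegates the thermal case to a ``secondary difficulty.'' The theorem is about the canonical partition function ${\rm Tr}(e^{-\beta H})$, and the paper's central construction is precisely the imaginary-time object: it implements the shifted Boltzmann function $e^{-\beta(x+1)}$ through the low-weight Fourier lemma of van Apeldoorn et al.\ (with the cutoff $\delta\sim 1/\beta$ giving polynomial degree $M_k=O(\beta\log 1/\epsilon)$ while avoiding exponential subnormalization), realized as $U_{{\rm boltz},s_k}=e^{-\beta(\tilde H_p+1)/2}\otimes\ket{0}\bra{0}+G$. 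The $\sqrt{\max_s Z(\beta,s)/2^n}/\varepsilon$ scaling in the theorem arises \emph{only} because the flag-qubit success probability of this non-unitary block on the infinite-temperature thermofield double is $p_0=Z(\beta,s)/N$, so amplitude-estimating $a_0=\sqrt{p_0}$ to precision $\varepsilon$ propagates to an error $O(\sqrt{Z/N}\,\varepsilon)$ in $p_0$. A Hadamard test on a genuinely unitary $V$ does not probe an amplitude of size $\sqrt{Z/2^n}$, so the cost you quote does not follow from the mechanism you describe.

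Second, your cost accounting drops the Trotter-step count: each GQSP query is $S_p^{\lceil 1/(s_kt)\rceil}(s_kt)$, costing $O(5^p/(t|s_k|))$ stages, and summing $1/|s_k|$ over Chebyshev nodes contributes $O(M_{\rm cheb}\log M_{\rm cheb})$. More importantly, you correctly identify the boundedness of $|{\rm Tr}(e^{-\beta\tilde H_p(\tau)})/N|$ on the Bernstein ellipse as the crux but leave it unresolved, whereas the paper closes it using the complex-step Trotter bound $\|\tilde H_p(\tau)-H\|=O(|\tau|^p\alpha/(p+1)!)$ together with a Zassenhaus/eigenbasis argument giving $|{\rm Tr}(e^{-\beta\tilde H_p})/N|=O(e^{\beta(rt)^p\alpha/p!})\,Z/N$. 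This bound grows with $\beta$ in the exponent, so the claimed node count $M_{\rm cheb}=\tilde O(\log 1/\epsilon+\alpha)$ and the quasi-linear-in-$\beta$ depth are only obtained after the explicit balancing $p\sim\sqrt{\log_5\beta}$, $t\sim e^{-\sqrt{\log\beta\log 5}}$, $r\sim e^{1/\sqrt{\log_5\beta}}$, which keeps both $e^{\beta(rt)^p\alpha/p!}$ and $5^p/t$ sub-polynomial in $\beta$. Without that parameter choice the product of your three counts does not yield the stated $\tilde O(\beta\log(1/\epsilon)\sqrt{\max_s Z(\beta,s)/2^n}/\varepsilon)$.
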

Here $n$ ancillary qubits are for the estimation of the trace~\cite{chowdhury2016quantumalgorithmsgibbssampling}, and the other 1 ancillary qubit is for the Generalized Quantum Signal Processing (GQSP) we use from ~\cite{motlagh2024generalizedquantumsignalprocessing}, and the other 1 ancillary qubit is used for quantum amplitude estimation ( See for example IQAE in \cite{grinko2021iterative}).
In the next section, we will introduce the tools we use for the polynomial expansion of $e^{\beta(x+1)}$ needed for GQSP and the Chebyshev interpolation methods we use for interpolating towards $s\to 0$.

\section{Polynomial Size Estimates } \label{sec:estimates}
In this section, we provide the basics of the approximations and estimate the polynomial size needed for targeted errors. Both for the polynomial expansions needed for the signal processing implementing $e^{-\beta(x+1)}$, and for the Chebyshev interpolation needed for interpolation of $Z(\beta,s_k)/N$ towards $s\to 0$. 

\subsection{Low weight Fourier series approximation}
To implement the function $e^{-\beta(x+1)}$, we consider the low-weight Fourier series approximation provided in Ref.~\cite{van_Apeldoorn_2020}. The Lemma states that  
\begin{lemma}\label{lemma:LowWeightAPX}
  Let $\delta,\varepsilon\in\!(0,1)$ and $f:\mathbb{R}\rightarrow \mathbb{C}$ s.t. $\left|f(x)\!-\!\sum_{k=0}^K a_k x^k\right|\leq \varepsilon/4$ for all $x\in\![-1+\delta,1-\delta]$.
  Then $\exists\, c\in\mathbb{C}^{2M+1}$ such that
  $$
  \left|f(x)-\sum_{m=-M}^M c_m e^{\frac{i\pi m}{2}x}\right|\leq \varepsilon
  $$
  for all $x\in\![-1+\delta,1-\delta]$, where $M=\max\left(2\left\lceil \ln\left(\frac{4||a||_1}{\varepsilon}\right)\frac{1}{\delta} \right\rceil,0\right)$ and $||c||_1\leq ||a||_1$. Moreover, $c_m$ can be efficiently calculated on a classical computer in time $\text{poly}(K,M,\log(1/\varepsilon))$.
\end{lemma}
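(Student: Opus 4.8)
The plan is to pass through the substitution $x=\tfrac{2}{\pi}\arcsin\!\big(\sin\tfrac{\pi x}{2}\big)$, which is valid for all $x\in[-1,1]$, so that a polynomial in $x$ turns into a power series in $u:=\sin\tfrac{\pi x}{2}$ whose coefficients are automatically weight‑preserving, and then to read off the Fourier series by expanding powers of $\sin\tfrac{\pi x}{2}$ with the binomial theorem. Write $w(u):=\tfrac{2}{\pi}\arcsin u=\sum_{j\ge 0} w_j u^{2j+1}$ with $w_j>0$ and $\sum_j w_j=w(1)=1$, and set $G(u):=P(w(u))=\sum_k a_k\,w(u)^k=\sum_{l\ge 0}\gamma_l u^l$. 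Since each $w(u)^k$ has nonnegative Taylor coefficients summing to $w(1)^k=1$, we get $\sum_l|\gamma_l|\le\sum_k|a_k|\sum_l\big|[u^l]w(u)^k\big|=\sum_k|a_k|\cdot w(1)^k=\|a\|_1$. Moreover $G(\sin\tfrac{\pi x}{2})=P(x)$ on $[-1,1]\supseteq[-1+\delta,1-\delta]$, and as $x$ ranges over $[-1+\delta,1-\delta]$ the argument $u=\sin\tfrac{\pi x}{2}$ ranges over $[-a,a]$ with $a:=\cos\tfrac{\pi\delta}{2}<1$.

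Next I would truncate $G$ to the polynomial $\tilde G(u):=\sum_{l\le d}\gamma_l u^l$. On $[-a,a]$ its error is $|G(u)-\tilde G(u)|\le\sum_{l>d}|\gamma_l|\,a^l\le a^{d}\sum_{l>d}|\gamma_l|\le\|a\|_1\,a^{d}$, so choosing $d=M$ large enough that $\|a\|_1 a^{M}\le 3\varepsilon/4$ makes $|P(x)-\tilde G(\sin\tfrac{\pi x}{2})|\le 3\varepsilon/4$ on $[-1+\delta,1-\delta]$, which together with the hypothesis $|f-P|\le\varepsilon/4$ gives total error $\le\varepsilon$ by the triangle inequality. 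Finally, expand each power, $\big(\sin\tfrac{\pi x}{2}\big)^{l}=(2i)^{-l}\sum_{r=0}^{l}\binom{l}{r}(-1)^r e^{i\pi(l-2r)x/2}$; this is an exact Fourier series whose coefficients have $\ell_1$‑norm exactly $1$ and whose frequencies are $\tfrac{\pi}{2}m$ with $|m|\le l\le M$. Collecting terms yields $c\in\mathbb{C}^{2M+1}$ with $\tilde G(\sin\tfrac{\pi x}{2})=\sum_{|m|\le M}c_m e^{i\pi mx/2}$ and $\|c\|_1\le\sum_{l\le M}|\gamma_l|\le\|a\|_1$; and since the $\gamma_l$ are finite sums of products of the explicit rationals $w_j$, every $c_m$ is computable classically in $\mathrm{poly}(K,M,\log(1/\varepsilon))$ time.

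The main obstacle is the quantitative size of $M$. The estimate above only gives $M=O\!\big(\log(\|a\|_1/\varepsilon)\,/\,(-\log\cos\tfrac{\pi\delta}{2})\big)$, and since $-\log\cos\tfrac{\pi\delta}{2}$ is of order $\delta^{2}$ for small $\delta$, this is $O(\delta^{-2}\log(\|a\|_1/\varepsilon))$ — a factor $\delta^{-1}$ worse than the claimed $M=2\lceil\ln(4\|a\|_1/\varepsilon)/\delta\rceil$. To recover the stated linear‑in‑$\delta^{-1}$ bound one should replace the Taylor truncation of $G$ by its near‑best degree‑$d$ polynomial approximation on $[-a,a]$: the only singularities of $G$ are the branch points of $\arcsin$ at $u=\pm1$, which for $a=\cos\tfrac{\pi\delta}{2}$ put $[-a,a]$ inside the Bernstein ellipse of parameter $\rho=\sec\tfrac{\pi\delta}{2}+\tan\tfrac{\pi\delta}{2}\ge 1+\delta$, so the approximation error decays like $\rho^{-d}$ and $d=O(\delta^{-1}\log(\|a\|_1/\varepsilon))$ suffices. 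The catch is that a generic near‑best approximant on a proper subinterval $[-a,a]\subsetneq[-1,1]$ can have monomial coefficients of $\ell_1$‑norm exponentially large in $d$ (witness rescaled Chebyshev polynomials), so this step must be performed while simultaneously controlling the coefficient $\ell_1$‑norm — e.g. by a Chebyshev‑economization argument that tracks the weights, or by exploiting the positivity of the Taylor coefficients of $\arcsin$. Reconciling the near‑optimal degree with the exact bound $\|c\|_1\le\|a\|_1$ is the delicate heart of the lemma; everything else is bookkeeping.
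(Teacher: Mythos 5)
Your first two paragraphs correctly reproduce the opening of the argument behind this lemma (which the paper itself only cites from Ref.~\cite{van_Apeldoorn_2020} rather than proves): the substitution $x=\tfrac{2}{\pi}\arcsin(\sin\tfrac{\pi x}{2})$, the nonnegativity of the $\arcsin$ Taylor coefficients giving $\|c\|_1\le\|a\|_1$ for free, and the binomial expansion of $\sin^l(\tfrac{\pi x}{2})$ into exponentials. You also correctly diagnose that a plain Taylor truncation in $u=\sin\tfrac{\pi x}{2}$ only yields $M=O(\delta^{-2}\log(\|a\|_1/\varepsilon))$, since $|u|\le\cos\tfrac{\pi\delta}{2}=1-\Theta(\delta^2)$. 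The gap is in how you propose to remove the extra $\delta^{-1}$: replacing the truncation by a near-minimax polynomial approximant on $[-\cos\tfrac{\pi\delta}{2},\cos\tfrac{\pi\delta}{2}]$ is not the mechanism, and as you yourself concede it destroys the $\ell_1$ control that is the whole point of the lemma (rescaled Chebyshev polynomials have exponentially large monomial coefficients), so your writeup ends exactly where the proof has to begin.

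The missing idea is binomial concentration, and it is visible in the paper's own Eq.~\eqref{eq:fourier_mapprox}, where the inner sum runs only over $m\in[\tfrac{l}{2}-M',\tfrac{l}{2}+M']$. One keeps the ``bad'' degree $L=O(\delta^{-2}\ln(\|a\|_1/\varepsilon))$ truncation of the power series in $u$, expands each $\sin^l(\tfrac{\pi x}{2})=(2i)^{-l}\sum_{m=0}^{l}\binom{l}{m}(-1)^m e^{i\pi(l-2m)x/2}$, and then observes that the weights $\binom{l}{m}2^{-l}$ are a binomial distribution concentrated at $m=l/2$, i.e.\ at frequency $0$. By Hoeffding/Chernoff, discarding all terms with $|m-\tfrac{l}{2}|>M'$ costs at most $2e^{-2M'^2/l}\le 2e^{-2M'^2/L}$ in $\ell_1$-weighted error per power $l$, so $M'=O(\sqrt{L\ln(\|a\|_1/\varepsilon)})=O(\delta^{-1}\ln(\|a\|_1/\varepsilon))$ suffices to keep the total additional error below $O(\varepsilon)$. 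The surviving frequencies satisfy $|l-2m|\le 2M'$, which is what makes the Fourier bandwidth $M=2M'$ scale as $\delta^{-1}\ln(\|a\|_1/\varepsilon)$ even though the underlying polynomial degree in $u$ is $\delta^{-2}\ln(\|a\|_1/\varepsilon)$; and since this step only \emph{discards} terms from an expansion whose total $\ell_1$ weight is already $\|a\|_1$, the bound $\|c\|_1\le\|a\|_1$ survives automatically. Your ``delicate heart of the lemma'' is thus resolved not by better polynomial approximation theory but by decoupling the polynomial degree from the Fourier bandwidth.
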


From the above Lemma.~\ref{lemma:LowWeightAPX}, we could derive the following corollary for the required order of the low-weight Fourier expansion for the Gibbs function $e^{-\beta (x+1)}$. 
\begin{corollary} \label{cor:M_k}
    For the function of $e^{-\beta(x+1)}$ with $x\in\![-1+\delta,1-\delta]$, the required order of expansion with error $\varepsilon$ scales as $M \sim O(\beta \ln(1/\varepsilon))$
\end{corollary}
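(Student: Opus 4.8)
The plan is to satisfy the hypothesis of Lemma~\ref{lemma:LowWeightAPX} with a truncated Taylor expansion of the Gibbs function and then read off $M$ directly from the Lemma. Writing $e^{-\beta(x+1)}=e^{-\beta}\,e^{-\beta x}$ and truncating the exponential series at degree $K$, I would take $a_k=e^{-\beta}(-\beta)^k/k!$ for $k=0,\dots,K$. Taylor's theorem with Lagrange remainder (equivalently, bounding the tail of the exponential series) gives, for all $x\in[-1,1]$ and in particular on $[-1+\delta,1-\delta]$,
\begin{align}
\left|e^{-\beta(x+1)} - \sum_{k=0}^K a_k x^k\right| \le e^{-\beta}\sum_{k=K+1}^{\infty}\frac{\beta^k}{k!} \le \frac{\beta^{K+1}}{(K+1)!},
\end{align}
where the last step uses $\sum_{k\ge K+1}\beta^k/k!\le \frac{\beta^{K+1}}{(K+1)!}e^{\beta}$. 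Hence the hypothesis of Lemma~\ref{lemma:LowWeightAPX} holds at accuracy $\varepsilon/4$ as soon as $\beta^{K+1}/(K+1)!\le \varepsilon/4$, and by Stirling's bound $(K+1)!\ge((K+1)/e)^{K+1}$ this is achieved with $K=O\!\left(\beta+\log(1/\varepsilon)\right)$.

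The second ingredient is the $\ell_1$ bound on the coefficient vector. Here the $e^{-\beta}$ prefactor is exactly what saves us: $\|a\|_1 = e^{-\beta}\sum_{k=0}^K \beta^k/k! \le e^{-\beta}e^{\beta}=1$, so the logarithmic factor in the Lemma does not pick up any $\beta$-dependence. Plugging $\|a\|_1\le 1$ into Lemma~\ref{lemma:LowWeightAPX} then yields a low-weight Fourier approximation of $e^{-\beta(x+1)}$ on $[-1+\delta,1-\delta]$ with accuracy $\varepsilon$ and order
\begin{align}
M = \max\!\left(2\left\lceil \frac{1}{\delta}\ln\frac{4\|a\|_1}{\varepsilon}\right\rceil,\,0\right) \le 2\left\lceil \frac{1}{\delta}\ln\frac{4}{\varepsilon}\right\rceil = O\!\left(\frac{\log(1/\varepsilon)}{\delta}\right).
\end{align}
Finally, in our setting the margin is taken as $\delta=\Theta(1/\beta)$ --- the natural scale at which the effective-Hamiltonian spectrum is safely enclosed in $[-1+\delta,1-\delta]$ after the $e^{-\beta}$ rescaling --- which converts the bound into $M=O\!\left(\beta\log(1/\varepsilon)\right)$, as claimed.

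I expect the only substantive point to be the last one: making precise why the relevant $\delta$ scales as $1/\beta$ (rather than being an independent free parameter), since it is this identification that produces the product $\beta\log(1/\varepsilon)$ in the conclusion; the Taylor tail estimate and the Stirling bound on $K$ are routine. A secondary caveat worth stating is that this route is not tight --- a Chebyshev/modified-Bessel analysis of $e^{-\beta x}$ would replace $\beta\log(1/\varepsilon)$ by the sum $O(\beta+\log(1/\varepsilon))$ --- but the weaker estimate is already enough to support the overall $\tilde O(\beta\log(1/\varepsilon))$ cost of Theorem~\ref{thm:main_informal}.
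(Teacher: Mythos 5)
Your proposal is correct and follows essentially the same route the paper intends: apply Lemma~\ref{lemma:LowWeightAPX} to the truncated Taylor series of $e^{-\beta(x+1)}$, use the $e^{-\beta}$ prefactor to get $\|a\|_1\le 1$, and substitute $\delta\sim 1/\beta$ (which the paper justifies exactly as you do, by the need to keep the spectrum away from $-1$ without incurring an $e^{-\beta\delta}$ sub-normalization) to obtain $M=O(\beta\log(1/\varepsilon))$. The paper states the corollary without writing out the Taylor-tail and $\ell_1$ details, so your filled-in version is a faithful elaboration rather than a different argument.
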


For Hamiltonian $H$ with norm $||H|| \leq 1 $, we shift the Boltzmann function to $e^{-\beta (x+1)}$ to  to make the argument $H+1$ positive definite and keep the norm of $||e^{-\beta (H+1)}|| \leq 1$. 
Observe that when the ground state energy of $H$ lies at exactly $-1$, the polynomial degree needed to approximate the function becomes infinite as shown in Lemma.~\ref{lemma:LowWeightAPX}. 
Therefore, by identifying $x$ with $\frac{H + \delta}{1 + \delta}$, the ground state energy of $x$ is away from $-1$ by $O(\delta)$. 
Further, to avoid the exponential sub-normalization of the $e^{-\beta \delta}$, we require that $\delta \sim 1/\beta$.

\subsection{ Chebyshev expansion} 

To interpolate $Z(\beta,s_k)/N$ towards $s\to 0$, we follow the methods of \cite{Rendon2024improvedaccuracy,rendon2023dequantizing}. We set 
\begin{align}
f(s)= Z(\beta,s)/N
\end{align}
Let $P_{M_{\rm cheb}-1}f(s) = \sum_{j = 0}^{M_{\rm cheb}-1}a_j u_j(s)$ be the $M_{\rm cheb}-1$ linear order interpolate, then we could write the interpolate by solving the following linear equation for the vector $f = [f(s_1), f(s_2),...,f(s_{M_{\rm cheb}})]$ and $a = [a_1, a_2,...,a_{M_{\rm cheb}}]$ :
\begin{align}
    f = \mathbf{V}  a ,
\end{align}
where
\begin{align}
 \mathbf{V} :=
\begin{pmatrix}
 u_0(s_1)   & u_1(s_1)   & \dots  & u_{{M_{\rm cheb}}-1}(s_1) \\
 u_0(s_2)   & u_1(s_2)   & \dots  & u_{{M_{\rm cheb}}-1}(s_2) \\
 \vdots     & \vdots     & \ddots & \vdots  \\
 u_0(s_{M_{\rm cheb}}) & u_1(s_{M_{\rm cheb}}) & \dots  & u_{{M_{\rm cheb}}-1}(s_{{M_{\rm cheb}}})
\end{pmatrix}.
\end{align}
With this, one can obtain the coefficients, $a_j$, with $a=\mathbf{V}^{-1}f$. The choice of interpolating nodes $s_k$ and the interpolating set of polynomials are the Chebyshev nodes and polynomials.
In this case, $u_j$ is defined by
\begin{align} \label{eq:cheb_orthonorm_def}
     u_j(s) :=
    \begin{cases}
    \sqrt{\frac{1}{M_{\rm cheb}} }T_0(s), &j=0 \\
    \sqrt{\frac{2}{M_{\rm cheb}}}T_j(s), &j=1,2,\dots,M_{\rm cheb}-1 \\
    \end{cases} 
\end{align}
where $T_j$ is the standard $j$th Chebyshev polynomial.
\begin{align}
    T_j(s) := \cos (j \cos^{-1} s)
\end{align}
The node collocation is described by
\begin{align} \label{eq:cheb_node_def}
    s_k = \cos\left(\frac{2k-1}{2M_{\rm cheb}} \pi\right),\quad k \in \{1,2,\dots,M_{\rm cheb}\}.
\end{align}
These polynomials fulfill the discrete orthonormality condition \cite{mason2002chebyshev} with respect to the collocation nodes, that is,
\begin{align}
    \sum_{k=1}^{M_{\rm cheb}} u_i (s_k) u_j (s_k) = \delta_{ij}
\end{align}
for all $0\leq i, j < {M_{\rm cheb}}$.  With this, the interpolation is well conditioned as the condition number for $\mathbf{V}$ is optimal, i.e. $\kappa\left(\mathbf{V}\right)= \sigma_{\rm max} \left(\mathbf{V}\right)/\sigma_{\rm min} \left(\mathbf{V}\right)=1$. 

After simplifications, we can bypass the estimation of the coefficients $a_i$, which is a quadratic operation in ${M_{\rm cheb}}$, and use
\begin{align}
   P_{{M_{\rm cheb}}-1}f(0)=\sum^{M_{\rm cheb}}_{k=1} d_k f(s_k)
\end{align}
where
\begin{align}
    d_k &= \frac{1}{{M_{\rm cheb}}}(-1)^{k+{M_{\rm cheb}}/2} \tan\left(\frac{2k-1}{2{M_{\rm cheb}}}\pi\right).
\end{align}
We define the interpolation error to be 
\begin{align}
    \epsilon_{\rm cheb} = |f(0) - P_{{M_{\rm cheb}}-1}f(0)|.
\end{align}

The Trotter interpolation error bounds could be analyzed using the following Lemma. 
For each $\rho > 1$, let $B_\rho \subset \mathbb{C}$ be the Bernstein ellipse with foci at $\pm 1$ and semimajor axis $(\rho + \rho^{-1})/2$. The following lemma bounds the Chebyshev interpolation error for analytic functions on $B_\rho.$
\begin{lemma}\label{lem:Berns}
    Let $f(z)\in \mathbb{C}$ be an analytic function on $B_\rho$, and suppose $C\in\mathbb{R}_+$ is an upper bound such that $|f(z)| \le C$, for all $z \in B_\rho$. Then the Chebyshev interpolation error on $[-1,1]$ satisfies
    $$
    \norm{f - P_{{M_{\rm cheb}}-1} f}_{\infty} \leq \frac{4 C \rho^{-({M_{\rm cheb}}-1)}}{\rho - 1}
    $$
    for each degree $n>0$ of the interpolant through the ${M_{\rm cheb}}$ Chebyshev nodes.
\end{lemma}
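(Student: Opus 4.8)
The plan is to prove Lemma~\ref{lem:Berns} by the standard Chebyshev/Bernstein contour-integral argument, specialized to interpolation (rather than truncation) of the Chebyshev series. First I would recall the Joukowski map $z = (w+w^{-1})/2$, which sends the circle $|w|=\rho$ to the Bernstein ellipse $B_\rho$ and $|w|=1$ to $[-1,1]$, and under which $T_n(z) = (w^n + w^{-n})/2$. The Chebyshev coefficients $a_n$ of $f$ admit the contour representation $a_n = \frac{1}{\pi i}\oint_{|w|=1} f\!\left(\tfrac{w+w^{-1}}{2}\right) w^{-n-1}\,dw$ (with the usual factor $\tfrac{1}{2}$ for $n=0$); deforming the contour to $|w|=\rho$, which is legitimate because $f$ is analytic inside and on $B_\rho$, and using $|f| \le C$ there together with $|w^{-n-1}| = \rho^{-n-1}$ and $|dw| = \rho\,d\theta$, gives the decay estimate $|a_n| \le 2C\rho^{-n}$ for all $n \ge 1$.

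Next I would pass from the series coefficients to the interpolation error. The key classical fact is the aliasing formula: the degree-$(M_{\rm cheb}-1)$ interpolant through the $M_{\rm cheb}$ Chebyshev nodes $s_k$ reproduces $\sum_{n=0}^{M_{\rm cheb}-1} a_n T_n$ except that every higher Chebyshev mode $T_n$ with $n \ge M_{\rm cheb}$ is "aliased" back onto a low-order mode $\pm T_{n \bmod \text{(something)}}$ with unit coefficient in absolute value. Concretely, $f - P_{M_{\rm cheb}-1}f = \sum_{n \ge M_{\rm cheb}} a_n (T_n - \tilde{T}_n)$ where each $\tilde T_n$ is the interpolant of $T_n$ and satisfies $\|T_n - \tilde T_n\|_\infty \le 2$. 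Hence $\|f - P_{M_{\rm cheb}-1}f\|_\infty \le 2\sum_{n \ge M_{\rm cheb}} |a_n| \le 4C \sum_{n \ge M_{\rm cheb}} \rho^{-n} = \frac{4C\rho^{-M_{\rm cheb}}}{1 - \rho^{-1}} = \frac{4C\rho^{-(M_{\rm cheb}-1)}}{\rho - 1}$, which is exactly the claimed bound. (This also explains the otherwise-mysterious constant $4$ and exponent $M_{\rm cheb}-1$: they come from doubling the truncation constant because of aliasing, and from summing the geometric tail starting at index $M_{\rm cheb}$ and factoring out one power of $\rho$.)

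Alternatively — and this is probably the cleaner write-up — I would simply cite the standard reference (e.g. Trefethen, \emph{Approximation Theory and Approximation Practice}, Theorem 8.2, or the bound already used implicitly in \cite{Rendon2024improvedaccuracy,rendon2023dequantizing}), since the statement is a textbook result; the self-contained proof above is then a two-line sketch rather than a full derivation. The main obstacle, such as it is, is bookkeeping: getting the geometric-series constant and the aliasing factor of $2$ exactly right so that the final constant is $4C/(\rho-1)$ and not $2C/(\rho-1)$ or $4C\rho/(\rho-1)$, and making sure the $n=0$ normalization of the Chebyshev coefficient does not spoil the uniform bound $|a_n| \le 2C\rho^{-n}$. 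There is also a minor cosmetic point that the lemma statement writes "for each degree $n>0$" while the displayed inequality uses $M_{\rm cheb}$; I would phrase the proof so the roles of $n$ and $M_{\rm cheb}$ are unambiguous (the interpolant has degree $M_{\rm cheb}-1$ and passes through $M_{\rm cheb}$ nodes).
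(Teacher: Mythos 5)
Your proposal is correct and matches the paper, which proves this lemma simply by citing Theorem 8.2 of Trefethen's \emph{Approximation Theory and Approximation Practice} — exactly the citation you suggest as the "cleaner write-up." Your self-contained sketch (coefficient decay $|a_n|\le 2C\rho^{-n}$ via the Joukowski contour plus the aliasing bound and geometric tail) is the standard proof of that theorem and reproduces the constant $4C\rho^{-(M_{\rm cheb}-1)}/(\rho-1)$ correctly.
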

The proof of this Lemma is given by Theorem 8.2 of Ref.~\cite{10.5555/3384673}.

In the following section we will explain how we use a new \cite{motlagh2024generalizedquantumsignalprocessing} Quantum Signal Processing method to use to implement the LWF for the function $e^{-\beta(x+1)}$ which is needed to implementing the Boltzmann factor $e^{-\beta (H+1)}$. 

\section{Quantum Signal Processing for Trotterized Evolution Operator}\label{sec:QSP_for_Trotterized_Ev_op}

One method of implementing the Trotterized evolution operator on a quantum circuit is using the Generalized Quantum Signal Processing (GQSP) \cite{motlagh2024generalizedquantumsignalprocessing}. The procedure consists of interleaving the following two operations, and the circuit consists of the system and an ancillary qubit. For an evolution operator $U$, the anti-controlled evolution operator is defined as:
\begin{equation}
A = (\vert 0\rangle\langle0\vert \otimes U)+(\vert 1\rangle\langle1\vert \otimes I) = \begin{bmatrix}

U & 0  \\

0 & I \\
\end{bmatrix},\end{equation}
and the arbitrary SU(2) rotations of the ancillary qubit:
\begin{equation}
R(\theta, \phi, \lambda) = \begin{bmatrix} e^{i(\lambda+\phi)}\cos(\theta) & e^{i\phi}\sin(\theta) \\
e^{i\lambda}\sin(\theta) & -\cos(\theta)  \\ \end{bmatrix} \otimes I.\label{eq:R}
\end{equation}
These operations allow the implementation of an arbitrary polynomial transformation:
\begin{align}
  \left|  P(U) \right| \leq 1
\end{align}
where $P\in \mathbb{C}[x] $ and $\deg(P)\leq d$. The main theorem formalizes these results in Ref.~\cite{motlagh2024generalizedquantumsignalprocessing}. 

\begin{theorem}[Generalized Quantum Signal Processing]\label{lem:PolynomialOfUnitaries}
$\forall d\in \mathbb{N},\>\exists\> \vec{\theta}, \vec{\phi} \in \mathbb{R}^{d+1},\>\lambda \in \mathbb{R}$ s.t:
\begin{equation}
    \left( \prod_{j=1}^{d} \mbox{$\large {R(\theta_{j}, \phi_{j}, 0) A}$} \right) R(\theta_0, \phi_0, \lambda) = 
    \begin{bmatrix}
        P(U) & .\>\>\>  \\
        
        Q(U) & .\>\>\> \\
    \end{bmatrix}\\
\end{equation}
\[
\mbox{\Large$\Longleftrightarrow$}
\]
\begin{center}
    \begin{varwidth}{\textwidth}
        \begin{enumerate}
            \item  $P, \,Q\in \mathbb{C}[x]$ and $\text{deg}(P), \text{deg}(Q) \leq d$.
            \item  $\forall x\in \mathbb{T},\> |P(x)|^2 + |Q(x)|^2 = 1$.
        \end{enumerate}
    \end{varwidth}
\end{center}
\end{theorem}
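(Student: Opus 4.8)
The plan is to prove the biconditional by induction on $d$ via a layer-stripping argument, following the strategy of Ref.~\cite{motlagh2024generalizedquantumsignalprocessing} (itself a generalization of the standard QSP completeness proof). Throughout, one writes the left-hand product as a $2\times 2$ block matrix $\begin{bmatrix} P(U) & \tilde P(U) \\ Q(U) & \tilde Q(U) \end{bmatrix}$ on the ancilla$\,\otimes\,$system register, with polynomial blocks, and keeps in mind that for $x\in\mathbb{T}$ (so $|x|=1$) the identity $|P(x)|^2+|Q(x)|^2=1$ is an identity between Laurent polynomials, which is the structural fact the whole argument hinges on.

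\textbf{Forward direction ($\Rightarrow$).} Induct on the number of $A$-layers. The base case $d=0$ is the single rotation $R(\theta_0,\phi_0,\lambda)$, whose first column is $\bigl(e^{i(\lambda+\phi_0)}\cos\theta_0,\ e^{i\lambda}\sin\theta_0\bigr)$: degree-$0$ polynomials with squared moduli summing to $1$. For the inductive step, adjoin one more factor $R(\theta,\phi,0)A$. Multiplication by $A=\begin{bmatrix} U & 0 \\ 0 & I\end{bmatrix}$ replaces the top-row blocks $(P,\tilde P)$ by $(UP,U\tilde P)$; then multiplication by $R(\theta,\phi,0)\otimes I$ gives new first-column blocks $P'=e^{i\phi}(\cos\theta\, UP+\sin\theta\, Q)$ and $Q'=\sin\theta\, UP-\cos\theta\, Q$, of degree at most one more than before, hence $\le d$. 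On $x\in\mathbb{T}$, orthogonality of $\begin{bmatrix}\cos\theta & \sin\theta \\ \sin\theta & -\cos\theta\end{bmatrix}$ cancels the cross terms, so $|P'(x)|^2+|Q'(x)|^2=|x|^2|P(x)|^2+|Q(x)|^2=|P(x)|^2+|Q(x)|^2=1$ by the inductive hypothesis.

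\textbf{Converse ($\Leftarrow$).} Given $P,Q\in\mathbb{C}[x]$ with $\deg P,\deg Q\le d$ and $|P(x)|^2+|Q(x)|^2=1$ on $\mathbb{T}$, I want to strip off one layer: produce angles $\theta_d,\phi_d$ and polynomials $P_{d-1},Q_{d-1}$ of degree $\le d-1$ with $xP_{d-1}=\cos\theta_d e^{-i\phi_d}P+\sin\theta_d\, Q$ and $Q_{d-1}=\sin\theta_d e^{-i\phi_d}P-\cos\theta_d\, Q$, which is exactly the inverse of the recursion above. Demanding $P_{d-1}$ genuine (divisibility by $x$) kills the constant term of $\cos\theta_d e^{-i\phi_d}P+\sin\theta_d Q$, and demanding $\deg Q_{d-1}\le d-1$ kills the $x^d$ coefficient of $\sin\theta_d e^{-i\phi_d}P-\cos\theta_d Q$; writing $p_0,q_0$ for the constant coefficients and $p_d,q_d$ for the leading ones, this is a pair of equations for $(\tan\theta_d, e^{-i\phi_d})$. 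Consistency is where the normalization enters: expanding $|P(x)|^2+|Q(x)|^2$ as a Laurent polynomial and forcing it to equal the constant $1$ annihilates its $x^d$ coefficient $p_d\bar p_0+q_d\bar q_0$, and substituting $p_d\bar p_0=-q_d\bar q_0$ into the compatibility condition yields $\tan^2\theta_d=|p_0|^2/|q_0|^2\ge 0$ with $|e^{-i\phi_d}|=1$, so a real $\theta_d\in[0,\pi/2)$ and a genuine phase $\phi_d$ exist (degenerate cases $p_0=0$ or $p_d=0$ are handled by $\theta_d\in\{0,\pi/2\}$, where the same constraint forces the conflicting coefficient to vanish as well). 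The stripped pair again satisfies $|P_{d-1}(x)|^2+|Q_{d-1}(x)|^2=1$ on $\mathbb{T}$ by the same orthogonality computation, so the induction runs down to $d=0$, where a single $R(\theta_0,\phi_0,\lambda)$ realizes the remaining constants $P_0,Q_0$ by matching moduli to $\cos\theta_0,\sin\theta_0$ and adjusting $\phi_0,\lambda$ to match phases.

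The only nonroutine part, and the step I expect to be the main obstacle, is the consistency check in the converse: one must carefully track how the circle-normalization $|P|^2+|Q|^2\equiv 1$ constrains the extremal Fourier coefficients of $P$ and $Q$, and then verify — including in the degenerate subcases — that the resulting system for $\theta_d$ and $\phi_d$ always admits a real angle together with a unit-modulus phase. The forward direction and all the degree bookkeeping are routine block-matrix algebra by comparison.
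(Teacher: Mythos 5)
Your proposal is correct: the forward induction, the layer-stripping converse, and the key observation that the circle identity $|P|^2+|Q|^2\equiv 1$ forces the extremal Laurent coefficient $p_d\bar p_0+q_d\bar q_0$ to vanish (which is exactly the consistency condition for $\theta_d,\phi_d$) reproduce the standard argument of Motlagh--Wiebe. The paper itself imports this theorem from Ref.~\cite{motlagh2024generalizedquantumsignalprocessing} without proof, so there is no in-paper proof to compare against; your reconstruction matches the cited source's approach, with only the doubly degenerate case $p_0=q_0=0$ left implicit (it is handled by noting that the divisibility constraint is then vacuous and only the leading-coefficient equation must be solved).
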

Here, $\mathbb{T}=\{x\in \mathbb{C} : |x|=1\}$.

Following this result, authors in \cite{motlagh2024generalizedquantumsignalprocessing} find that for a given $|P|^2 \leq 1$ on $\mathbb{T}$, there is a $Q$ to satisfy the requirements of Theorem.~\ref{lem:PolynomialOfUnitaries} and provided an efficient algorithm, $\tilde{O}({\rm poly}(d))$, to determine the necessary angles $\theta_j$ and $\phi_j$ for a given $P(U)$.

This leads to the following corollary, which states that we can build any arbitrary (appropriately scaled) polynomial P of U using our technique
\begin{corollary}\label{cor:PossiblePs}
 $\forall P\in \mathbb{C}[x]$, with $\text{deg}(P) = d$ if:
\begin{equation}
\forall x\in \mathbb{T}, \>|P(x)|^2 \leq 1
\end{equation}
Then $\exists\> \vec{\theta}, \vec{\phi} \in \mathbb{R}^{d+1}, \> \exists \lambda \in \mathbb{R}$ such that:
\begin{equation}\label{eq:phases_condition} 
\begin{bmatrix}
P(U) & .\>\>\>  \\
.\>\>\> & .\>\>\> \\
\end{bmatrix} = \left( \prod_{j=1}^{d} \mbox{$\large {R(\theta_{j}, \phi_{j}, 0) A}$} \right) R(\theta_0, \phi_0, \lambda)
\end{equation}
\end{corollary}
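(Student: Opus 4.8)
The plan is to reduce the claim to the backward ($\Longleftarrow$) direction of Theorem~\ref{lem:PolynomialOfUnitaries}. That theorem produces angles $\vec{\theta},\vec{\phi}\in\mathbb{R}^{d+1}$ and $\lambda\in\mathbb{R}$ realizing a block matrix with $P(U)$ in the top-left entry and $Q(U)$ directly below it, as soon as one exhibits a partner polynomial $Q\in\mathbb{C}[x]$ with $\deg(Q)\le d$ obeying the unit-sum condition $|P(x)|^2+|Q(x)|^2=1$ for all $x\in\mathbb{T}$. Since Corollary~\ref{cor:PossiblePs} only constrains the $(1,1)$ block of the circuit and leaves the other entries free, exhibiting any such $Q$ is enough.

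The first step is therefore the construction of $Q$. Put $g(x):=1-|P(x)|^2$. Restricted to the unit circle $x=e^{i\vartheta}$, the quantity $|P(x)|^2$ is a nonnegative real trigonometric polynomial of degree $d$, and the hypothesis $|P(x)|^2\le 1$ makes $g$ a nonnegative trigonometric polynomial of degree at most $d$. By the Fej\'er--Riesz theorem there exists $Q\in\mathbb{C}[x]$ with $\deg(Q)\le d$ and $|Q(x)|^2=g(x)$ on $\mathbb{T}$; this is precisely the complementary-polynomial step used in Ref.~\cite{motlagh2024generalizedquantumsignalprocessing}, which moreover comes with an efficient classical algorithm for computing $Q$. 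By construction the pair $(P,Q)$ satisfies conditions (1) and (2) of Theorem~\ref{lem:PolynomialOfUnitaries}.

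The second step is to invoke the $\Longleftarrow$ implication of Theorem~\ref{lem:PolynomialOfUnitaries} for this pair $(P,Q)$ and this value of $d$; here $\deg(P)=d$ is consistent with the ``$\le d$'' hypothesis, and the $d+1$ single-qubit rotations in the circuit match the lengths of $\vec{\theta}$ and $\vec{\phi}$. This yields
\[
\left(\prod_{j=1}^{d} R(\theta_{j},\phi_{j},0)\,A\right) R(\theta_0,\phi_0,\lambda)=\begin{bmatrix} P(U) & .\\ Q(U) & . \end{bmatrix},
\]
whose top-left block is $P(U)$, establishing \eqref{eq:phases_condition}.

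I expect the only delicate point to be the construction of $Q$: one must verify that $1-|P|^2$, viewed as a trigonometric (Laurent) polynomial on $\mathbb{T}$, is genuinely nonnegative and of degree at most $d$, so that a spectral (Fej\'er--Riesz) factorization exists --- in particular that any zeros on the circle have even multiplicity, which is automatic for a nonnegative trigonometric polynomial. Everything past that is a direct appeal to Theorem~\ref{lem:PolynomialOfUnitaries}; if one prefers not to re-derive Fej\'er--Riesz, the existence and efficient computability of $Q$ can simply be cited from Ref.~\cite{motlagh2024generalizedquantumsignalprocessing}, and the corollary is then immediate.
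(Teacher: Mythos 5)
Your proposal is correct and follows essentially the same route as the paper: the paper's (implicit) justification is precisely to cite the existence of the complementary polynomial $Q$ from Ref.~\cite{motlagh2024generalizedquantumsignalprocessing} and then apply the $\Longleftarrow$ direction of Theorem~\ref{lem:PolynomialOfUnitaries}. Your explicit Fej\'er--Riesz construction of $Q$ is just the standard proof of that cited existence step, so nothing is missing.
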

Such a result shows that the GQSP advances the traditional QSP method in the sense that it provides a simplification for the computation of phase angles, which is a challenge for the traditional QSP method. Further, it does not require the fixed parity of the function, which is another caveat of the traditional QSP method. 

\onecolumngrid

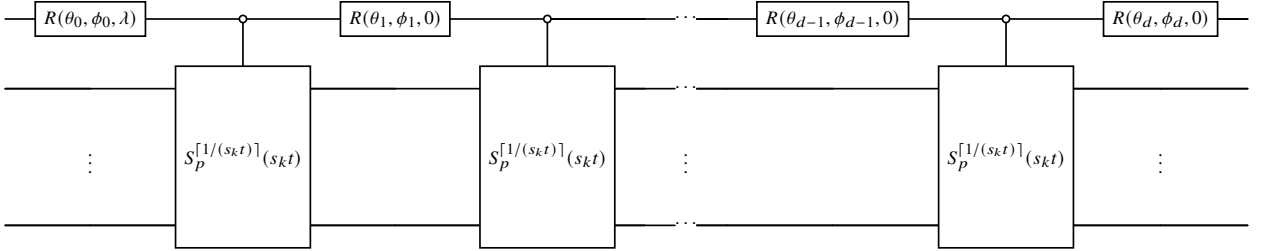
\begin{figure}[h!]
\centering

\scalebox{0.8}{
\begin{quantikz}
& \gate{R(\theta_0, \phi_0, \lambda)} & \octrl{1} & \gate{R(\theta_1, \phi_1, 0)} & \octrl{1} & \qw & \cdots & \qw & \gate{R(\theta_{d-1}, \phi_{d-1}, 0)} & \octrl{1} & \gate{R(\theta_d, \phi_d, 0)} & \qw \\
& \qw & \gate[3]{S^{\lceil 1/(s_k t) \rceil}_p(s_k t)} & \qw & \gate[3]{S^{\lceil 1/(s_k t) \rceil}_p(s_k t)} & \qw & \cdots & \qw & \qw & \gate[3]{S^{\lceil 1/(s_k t) \rceil}_p(s_k t)} & \qw & \qw \\
\setwiretype{n} &\vdots &  & &&& \vdots &  & & & \vdots & \\
& \qw & \qw & \qw & \qw & \qw & \cdots & \qw & \qw & \qw & \qw & \qw
\end{quantikz}
}
\caption{Quantum circuit for GQSP wherein the angles $\theta$ and $\phi$ are chosen to enact a polynomial transformation of $U$\label{fig:GQSP_circuit}.  Here $R(\theta,\phi,\lambda)$ is a general single qubit rotation as specified in~\eqref{eq:R}.}
\end{figure}

\twocolumngrid
In this work, we consider the Unitary operator to be the Trotterized operator as $U = S_p^{\lceil 1/s_kt \rceil}(s_k t)$ and the polynomial to be the low-weight Fourier expansion of a function as shown in Ref.~\cite{van_Apeldoorn_2020}. 
\begin{equation}
\begin{aligned}
    f(x) & \approx \sum_{k=0}^K a_k \sum_{l = 0}^{L} b_l^{k} \Big(\frac{i}{2}\Big)^l \sum_{m = \frac{l}{2} - M'}^{\frac{l}{2} + M'}(-1)^m \binom{l}{m} e^{\frac{i\pi x (2m-l)}{2}}\\
    & = \sum_{m = -M}^M c_m e^{\frac{i\pi x m}{2}}
    \label{eq:fourier_mapprox}
\end{aligned}
\end{equation}
where $a_k$ are the kth order Taylor coefficients of the function $f(x)$ and $b_l^k$ is defined as $\Big(\frac{\arcsin(y)}{\pi/2}\Big)^k = \sum_{l=0}^{\infty} b_l^k y^l$. The circuit of the GQSP implementation of the Trotterized Unitary is shown below in Fig.~\ref{fig:GQSP_circuit}.



\section{Cost analysis} \label{sec:cost_analysis}

First, we review the method of preparing the Gibbs state with the thermofield double state and quantum amplitude estimation (QAE), and further, we analyze the Trotter Interpolation Bounds and show that we could achieve the state-of-the-art cost efficiently using Trotterization and interpolation. 

\subsection{Thermofield Double State }

We review a known way of preparing the Gibbs state using quantum amplitude estimation following Ref.~\cite{Poulin2009}. 
The Gibbs state could be prepared from a thermal field double state by partial trace of part of the system. 
Therefore, the key to the preparation procedure is to prepare the thermal field in a double state at finite temperature.
An infinite temperature thermofield double state for systems $A,B$ with dimension $N$ could be written as 
\begin{align}
    \ket{\psi_0}_{AB} = \frac{1}{\sqrt{N}} \sum^{N-1}_{n=0} \ket{n}_{A} \otimes \ket{n}_{B}
\end{align}
For the following effective Hamiltonian
\begin{align}
   \tilde{H}_p =  \frac{ \log{S_{p} \left(s t\right)} } {i st}
\end{align}
We define the following operator
\begin{align}
    U_{{\rm boltz},s} =  e^{-\beta (\tilde{H}_p+1)/2} \otimes \ket{0}\bra{0}_C  +   G  ,
\end{align}
with an ancillary qubit register C. Here, $e^{-\beta (\tilde{H}_p +1)/2}$ acts either on register $A$ or $B$, any of those works. The operator $G$ fulfills the property $\left(I_{
AB} \otimes \bra{0}_C\right) G \left( I_{AB}\otimes \ket{0}_C \right)=0$.

\begin{lemma}
The unitary operator
\begin{align*}
    U_{{\rm boltz},{s_k}} =  e^{-\beta (\tilde{H}_p(s_k)+1)/2} \otimes \ket{0}\bra{0}  +   G  ,
\end{align*}
with a garbage operator $G$ with the property $\left(I \otimes \bra{0}\right) G \left( I\otimes \ket{0} \right)=0$ can be implemented with an error $\epsilon_{QSP}$ using $\tilde{O}(\beta \log(\epsilon_{QSP}))$ applications of  $S^{\lceil 1/(s_k t) \rceil}_{p} \left( s_k t\right)$.
\end{lemma}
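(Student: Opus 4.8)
<br>

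The plan is to realize $U_{{\rm boltz},s_k}$ directly as a Generalized Quantum Signal Processing (GQSP) circuit built from the single block $U:=S^{\lceil 1/(s_k t)\rceil}_p(s_k t)$, so that the whole cost collapses to the degree of one bounded polynomial. First I would make $U$ an \emph{exact} evolution: from $\tilde{H}_p(s_k t)=\log(S_p(s_k t))/(i s_k t)$ we have $S_p(s_k t)=e^{i s_k t\,\tilde{H}_p(s_k t)}$, hence $U=e^{i\mu_k\tilde{H}_p(s_k t)}$ with $\mu_k:=s_k t\lceil 1/(s_k t)\rceil\in[1,1+s_k t)$. Since $\|H\|\le1$, $t\le\pi$, and the Trotter step may be taken small, $\|\tilde{H}_p\|$ stays close to $\|H\|$, so the eigenphases $\mu_k\lambda_j$ of $U$ lie strictly inside $(-\pi,\pi)$; in particular $\tilde{H}_p$ is Hermitian via the principal branch of the logarithm and $\lambda\mapsto e^{i\mu_k\lambda}$ is injective on ${\rm spec}(\tilde{H}_p)$.

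Next I would reduce the lemma to a polynomial-construction statement. By Corollary~\ref{cor:PossiblePs} it suffices to produce a polynomial $P$ with $|P(z)|\le1$ on $\mathbb{T}$, degree $d=\tilde{O}(\beta\log(1/\epsilon_{QSP}))$, and $\bigl\|P(U)-e^{-\beta(\tilde{H}_p+1)/2}\bigr\|\le\epsilon_{QSP}$; the associated GQSP circuit of Fig.~\ref{fig:GQSP_circuit}, which uses one ancilla $C$, $d$ anti-controlled calls to $U$, and $d+1$ one-qubit rotations, then carries $P(U)$ in its $\ket{0}\bra{0}_C$ corner, and defining $G$ as this circuit minus $e^{-\beta(\tilde{H}_p+1)/2}\otimes\ket{0}\bra{0}_C$ yields $(I\otimes\bra{0}_C)G(I\otimes\ket{0}_C)=P(U)-e^{-\beta(\tilde{H}_p+1)/2}$ of norm $\le\epsilon_{QSP}$ and a $G$ with the stated garbage property; this is what ``implemented with error $\epsilon_{QSP}$'' means.

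To build $P$ I would feed the truncated Taylor series of $e^{-\beta(x+1)/2}$ into the low-weight Fourier construction of Lemma~\ref{lemma:LowWeightAPX} and eq.~\eqref{eq:fourier_mapprox}. The key fact is that the Taylor coefficients of $e^{-\beta(x+1)/2}$ satisfy $\|a\|_1=\sum_k\bigl|(-\beta/2)^k e^{-\beta/2}/k!\bigr|=1$, so (with a truncation at $K=O(\beta+\log(1/\epsilon_{QSP}))$) Corollary~\ref{cor:M_k} gives a Fourier approximant $\sum_{m=-M}^{M}c_m e^{i\pi m x/2}$ with $M=O(\beta\log(1/\epsilon_{QSP}))$ and $\|c\|_1\le\|a\|_1=1$. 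I would then compose with the affine reparametrization of Corollary~\ref{cor:M_k} (the shift $\tilde{H}_p\mapsto(\tilde{H}_p+\delta)/(1+\delta)$ with $\delta\sim1/\beta$, which keeps the spectrum inside $[-1+\delta,1-\delta]$, keeps $\tilde{H}_p+1$ positive with margin, and keeps the $e^{-\beta\delta/2}$ subnormalization $O(1)$), chosen so that each mode $e^{i\pi m x/2}$ also coincides with a unitary power $U^{\pm m}$ — possible precisely because $\mu_k\|\tilde{H}_p\|$ is bounded away from $\pi/2$ by the first step. This yields $P(z)=\sum_{m=-M}^{M}c_m z^m$, which obeys $|P(z)|\le\|c\|_1\le1$ on $\mathbb{T}$ with \emph{no} further rescaling (so the GQSP hypothesis holds and the encoded corner is $e^{-\beta(\tilde{H}_p+1)/2}$ itself, not a sub-normalized multiple), has degree $2M$, and matches $e^{-\beta(\tilde{H}_p+1)/2}$ on ${\rm spec}(\tilde{H}_p)$ to $\epsilon_{QSP}$. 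If one prefers the ordinary-polynomial form of Corollary~\ref{cor:PossiblePs} over a Laurent polynomial, replace $P$ by $z^{M}P(z)$, which has the same modulus on $\mathbb{T}$ and only inserts a harmless unitary factor $U^{M}$ commuting with $e^{-\beta\tilde{H}_p}$.

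Finally I would read off the cost: the circuit makes $d=2M=O(\beta\log(1/\epsilon_{QSP}))$ anti-controlled applications of $U=S^{\lceil 1/(s_k t)\rceil}_p(s_k t)$ (and of its inverse, again a product formula, if the Laurent form is used), plus $O(d)$ one-qubit rotations and one ancilla; the GQSP angles are computed classically in $\tilde{O}({\rm poly}(d))$ time to ${\rm polylog}$ precision. All of this is $\tilde{O}(\beta\log(1/\epsilon_{QSP}))$ applications of $S^{\lceil 1/(s_k t)\rceil}_p(s_k t)$, as claimed (we read the statement's $\log(\epsilon_{QSP})$ as $\log(1/\epsilon_{QSP})$). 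I expect the only genuinely delicate points to be (i) that $P$ is bounded by $1$ on \emph{all} of $\mathbb{T}$, which rests entirely on the identity $\|a\|_1=1$ for the Boltzmann Taylor series together with $\|c\|_1\le\|a\|_1$ in Lemma~\ref{lemma:LowWeightAPX}, and (ii) the frequency/domain matching between the fixed $\pi/2$-periodic Fourier series and the eigenphases $\mu_k\lambda_j$ of $U$, where one must invoke $\|H\|\le1$, $t\le\pi$, smallness of the Trotter step, and the $\delta\sim1/\beta$ shift to keep $\mu_k\|\tilde{H}_p\|/(1+\delta)$ safely below $\pi/2$ (equivalently, to keep $U$'s eigenphases away from the aliasing boundary $\pm\pi$).
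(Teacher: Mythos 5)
Your proposal is correct and follows essentially the same route as the paper's own (very brief) proof: apply GQSP to the single unitary $U=S^{\lceil 1/(s_k t)\rceil}_{p}(s_k t)$, handle the fractional-query mismatch with the ceiling-induced rescaling of $\beta$, and invoke the low-weight Fourier expansion of the Boltzmann factor (Lemma~\ref{lemma:LowWeightAPX}/Corollary~\ref{cor:M_k}) to get degree $O(\beta\log(1/\epsilon_{QSP}))$. In fact you supply several details the paper leaves implicit, notably the $\|a\|_1=1$ bound guaranteeing $|P|\le 1$ on $\mathbb{T}$ and the construction of the garbage operator $G$.
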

\begin{proof}
This is done through Generalized Quantum Signal Processing, where the circuit implements an approximation of the following operator, a function of the unitary operator $S^{\lceil 1/(s_k t) \rceil}_{p} \left( s_k t\right)$ (for positive $s_k$, for negative $s_k$ we use the floor function):
    \begin{align}
        f\left(S^{\lceil 1/(s_k t) \rceil}_{p}\right) &= \exp\left( i \beta_k \log\left(S^{\lceil 1/(s_k t) \rceil}_{p} \left( s_k t\right)\right)\right) \cr
        &= \exp\left( - \beta_k \left(s_k t\lceil 1/ (s_k t) \rceil\right) \tilde{H}_p (s_k t)\right)
    \end{align}
    Here, $\beta_k = \beta \frac{\lceil 1/ s_k t \rceil }{1/ s_k t}$. With this correction factor we avoid using fractional queries but obtain the right inverse temperature with respect to the effective Hamiltonian $\tilde{H}_p$. Moreover, it has been shown in \Cref{cor:M_k} that the degree order scales like $O(\beta_k \log{\epsilon_{QSP}})$ or equivalently $O(\beta \log{\epsilon_{QSP}})$.
    
\end{proof}

We note that measuring zero on the register $C$ using the following state
\begin{align}
   U_{{\rm boltz},s} \left( \ket{\psi_0}_{AB}\otimes \ket{0}_C\right)
\end{align}
has the probability of:
\begin{align}
    p_0 = \frac{\sum_n e^{-\beta \tilde{\lambda}_{n}}}{N} = \frac{Z(\beta,s)}{N}.
\end{align}
We can estimate the corresponding amplitude following Ref.~\cite{Brassard_2002}
\begin{align}
    a_0 = \sqrt{p_0}
\end{align}
through the following Grover operator
\begin{align}
    Q = - \mathcal{A} S_0 \mathcal{A}^{-1} S_{\chi}
\end{align}
where
\begin{align}
    \mathcal{A}  \left(\ket{0}_{AB} \otimes \ket{0}_C\right) = U_{{\rm boltz},s} \left(\ket{\psi_0}_{AB} \otimes \ket{0}_C \right),
\end{align}
\begin{align}
    S_0 = e^{i \pi \ket{0}\bra{0}_{ABC}} = I - 2\ket{0}\bra{0}_{ABC} 
\end{align}
and
\begin{align}
    S_\chi = e^{i \pi \ket{0}\bra{0}_{C}} = I - 2\ket{0}\bra{0}_{C} .
\end{align}
Thus, we can estimate $a_0$ with an additive error tolerance of $\varepsilon$ using $O(1/\varepsilon)$ controlled applications of $Q$. The uncertainty propagated to $p_0$ is then 
\begin{align}
    O\left(\sqrt{\frac{Z(\beta,s)}{N}} \varepsilon\right).
\end{align}
Here, we have illustrated one of the known ways to get the quadratic speed up for Gibbs state preparation. This is summarized in the following lemma:

\begin{lemma}\label{lem:ae_Nboltz}
There exists a quantum algorithm to estimate the canonical partition function $Z(\beta,s)$ of a Trotterized Hamiltonian, $\tilde{H}_p =  \frac{ \log{S_{p} \left(s t\right)} } {i st}$, with a statistical error $\varepsilon$ and a number of
\begin{align*}
    O\left(\frac{\sqrt{Z(\beta,s)/N}}{\varepsilon}\right)
\end{align*}
$U_{{\rm boltz},s}$ queries using $2\log_2 N+2$ qubits.
\end{lemma}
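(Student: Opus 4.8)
The plan is to combine the two ingredients already assembled in this section — the thermofield double construction and quantum amplitude estimation — treating $U_{\mathrm{boltz},s}$ as a black box and counting only its invocations. First I would fix the workspace: registers $A$ and $B$, each of $\log_2 N = n$ qubits, holding the infinite-temperature (maximally entangled) state $\ket{\psi_0}_{AB}$, which is simply $n$ Bell pairs and is preparable with $n$ Hadamards and $n$ CNOTs using zero queries to $U_{\mathrm{boltz},s}$; the single-qubit register $C$ that doubles as the GQSP ancilla internal to $U_{\mathrm{boltz},s}$; and one further ancilla for the amplitude-estimation routine, for a total of $2\log_2 N + 2$ qubits. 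Setting $\mathcal{A}(\ket{0}_{AB}\otimes\ket{0}_C) = U_{\mathrm{boltz},s}(\ket{\psi_0}_{AB}\otimes\ket{0}_C)$ as above, each application of $\mathcal{A}$ or $\mathcal{A}^{-1}$ costs exactly one query to $U_{\mathrm{boltz},s}$ (resp.\ $U_{\mathrm{boltz},s}^{\dagger}$).

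Next I would record the amplitude that carries the answer. Expanding $\ket{\psi_0}_{AB}$ in the eigenbasis of $\tilde H_p(s)$ and using that $e^{-\beta(\tilde H_p+1)/2}$ acts on a single half of the maximally entangled pair, the component of $\mathcal{A}(\ket{0}_{AB}\otimes\ket{0}_C)$ supported on the subspace $I_{AB}\otimes\ket{0}_C$ has squared norm $p_0 = \tfrac{1}{N}\sum_n e^{-\beta\tilde\lambda_n} = Z(\beta,s)/N$, exactly the success probability displayed above the statement; writing $a_0=\sqrt{p_0}$ we have $\mathcal{A}\ket{0} = a_0\,\ket{\Psi_{\mathrm{good}}}\ket{0}_C + \sqrt{1-a_0^2}\,\ket{\Psi_{\mathrm{bad}}}$ with $\ket{\Psi_{\mathrm{bad}}}$ annihilated by $I_{AB}\otimes\bra{0}_C$. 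I would then run amplitude estimation built from the Grover operator $Q = -\mathcal{A}\,S_0\,\mathcal{A}^{-1} S_\chi$, with $S_0$ and $S_\chi$ as defined above: by the analysis of Ref.~\cite{Brassard_2002}, or its ancilla-frugal iterative variant IQAE \cite{grinko2021iterative}, $M$ uses of $Q$ return an estimate $\hat a_0$ with $|\hat a_0 - a_0| = O(1/M)$ at constant success probability, amplifiable to $1-\eta$ by a median at $O(\log(1/\eta))$ extra cost.

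The only step requiring care is propagating the error from $a_0$ to $p_0$. From $\hat p_0 := \hat a_0^2$ one gets $|\hat p_0 - p_0| = |\hat a_0 - a_0|\,(\hat a_0 + a_0) \le |\hat a_0 - a_0|\,(2a_0 + |\hat a_0 - a_0|)$, which is $O(a_0/M) = O\!\big(\sqrt{Z(\beta,s)/N}\,/M\big)$ once $M\gtrsim 1/a_0$, so taking $M = \Theta\!\big(\sqrt{Z(\beta,s)/N}\,/\varepsilon\big)$ yields additive error at most $\varepsilon$ on $Z(\beta,s)/N$ (and, after multiplying the estimator by $N$, on $Z(\beta,s)$ at the stated precision; the regime $Z(\beta,s)/N\lesssim\varepsilon$ is handled trivially since the estimate $0$ is already within tolerance). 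Because $p_0$ is not known in advance, this target value of $M$ must be reached adaptively — IQAE does so automatically by shrinking its confidence interval until the width drops below the requested tolerance, or one may use a doubling schedule in $M$ that consults the running estimate of $a_0$ to decide when to halt — so that the total number of $Q$ applications, hence of $U_{\mathrm{boltz},s}$ queries, remains $O\!\big(\sqrt{Z(\beta,s)/N}\,/\varepsilon\big)$. The main obstacle is therefore not conceptual but bookkeeping: (i) the precision $\delta = \Theta(\varepsilon/\sqrt{p_0})$ needed to convert an estimate of $a_0$ into one of $p_0$ depends on the unknown $p_0$, and (ii) one must invoke IQAE rather than textbook phase-estimation-based amplitude estimation in order to hold the amplitude-estimation overhead to a single extra qubit, keeping the total at $2\log_2 N+2$.
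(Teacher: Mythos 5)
Your proposal is correct and follows essentially the same route as the paper: the paper proves this lemma implicitly via the preceding derivation (thermofield double state on registers $A,B$, the operator $\mathcal{A}$ built from one query to $U_{{\rm boltz},s}$, amplitude estimation of $a_0=\sqrt{Z(\beta,s)/N}$ with the Grover operator $Q$, and the same error propagation from $a_0$ to $p_0$), with the qubit count $2\log_2 N + 2$ accounted for identically. Your added remarks on the adaptive choice of the number of Grover iterations and on using IQAE to keep the amplitude-estimation overhead to a single ancilla only make explicit what the paper leaves to the citation of \cite{grinko2021iterative}.
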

The extra $\log_2 {N}$ qubits are for the trace operation, one extra qubit is for the generalized signal processing to apply $e^{- \beta \tilde{H}_p(s)}$, and the other is for quantum amplitude estimation algorithms (See \cite{grinko2021iterative}, for example).

\subsection{Trotter Interpolation Bounds}


Starting from the Trotterization error, from Lemma 12 in Ref.~\cite{rendon2023dequantizing}, the difference between the Hamiltonian and the effect Hamiltonian scales as 
\begin{align}
   \| L_{p}(\tau)\|=\|\tilde{H}_p (\tau) - H \| = O\left( \frac{|\tau|^p \alpha}{(p+1)!} \right),
\end{align}
for a small enough $|\tau|$, with $\tau = z t$, and $z$ is the complex continuation of the real variable $s$, and $\alpha \in \mathbb{R}^+$ be constant. 

Following this result, we can consider the operator
\begin{align}
    e^{-\Delta_{\beta} \tilde{H}_p (\tau)}
\end{align}
for some sufficiently small real parameter $\Delta_\beta$ such that we can use the Zassenhaus formula:
\begin{equation}
        e^{t(X+Y)} = e^{tX}~ e^{tY} ~e^{-\frac{t^2}{2} [X,Y]} ~
e^{\frac{t^3}{6}(2[Y,[X,Y]]+ [X,[X,Y]] )} ~
\cdots.
\end{equation}
which leads to,
\begin{align}
    e^{-\Delta_\beta \tilde{H}_p (\tau)} = e^{-\Delta_\beta H} e^{-\Delta_\beta L_p (\tau)} e^{-\frac{\Delta_\beta^2}{2}[H,L_p(\tau)]} \dots
\end{align}
With this, we want to estimate the error on the trace
\begin{align}
    \Tr \left( e^{-\beta \tilde{H}_p}\right)
\end{align}
with respect to $Z=\Tr \left(e^{-\beta H}\right)$. First we decompose $L_p(\tau)$ into the eigenbasis of $H$ and note that 
\begin{align}
  \left| \frac{\Tr\left(e^{-\beta \tilde{H}_p}\right)}{N} \right|& \leq \frac{1}{N}\sum_{\lambda}\left( e^{-\beta E_\lambda} \left(e^{\beta \| L_p (\tau) \|}\right)\right) \cr 
   &= O\left(e^{\frac{\beta |\tau|^{p}}{p!} \alpha}\right) \frac{Z}{N}
\end{align}
If we consider the disc $\tau = rt$, with radius $r$, enclosing such a Bernstein ellipse $B_{\rho}$ we can bound the trace within this disc with

\begin{align}
    \max_{z \in B_{\rho}}\left|\frac{\Tr\left(e^{-\beta \tilde{H}_p}\right)}{N}\right|& = O\left(e^{\frac{\beta (r t)^{p} \alpha }{p!} }\right) \frac{Z}{N}.
\end{align}
Here, the relationship between $\rho$ and $r$ is
\begin{align} 
  \rho = r + \sqrt{r^2-1}.
\end{align}
Now, using Lemma.~\ref{lem:Berns}, one knows the error from nth order Chebyshev interpolation scales as 
\begin{align}
    \epsilon_{\rm cheb} &= O\left(\frac{e^{ \alpha \beta (r t)^p/p!} \left(Z/N\right)}{r^{M_{\rm cheb}}}\right) \cr 
    &= O\left(\frac{e^{ \alpha \beta (r t)^p}/p!}{e^{M_{\rm cheb}\log{r}}}\right).
\end{align}
Solving for the order $n$, one gets
\begin{align}
   M_{\rm cheb}=\frac{1}{\log{r}} O\left(\log\left(\frac{Z/N}{\epsilon_{\rm cheb}}\right)   + \frac{\beta\alpha (r t )^p}{p!}\right).
\end{align}

We now choose the following scalings:
\begin{align}
    p \sim \sqrt{\log_5 \beta} \cr 
    t \sim e^{- \sqrt{\log \beta \log 5}} \cr
    r \sim e^{1/\sqrt{\log_5 \beta}}.
\end{align}
With this
\begin{align}
   M_{\rm cheb}=O\left( \sqrt{\log_5{\beta}}  \left( \log\left(\frac{Z/N}{\epsilon_{\rm cheb}}\right)   + \frac{\alpha }{(\sqrt{\log_5 \beta}/e)^{\sqrt{\log_5 \beta}}}\right)\right).
\end{align}
\begin{lemma}\label{lem:Mcheb}

One can estimate 
\begin{align}
    \frac{Z(\beta)}{N} = \frac{{\rm Tr}\left(e^{-\beta H}\right)}{N}
\end{align}
for a Hamiltonian $H=\sum^{\Gamma}_\gamma H_\gamma$ and $\| H \| \leq 1$ with an \textbf{algorithmic error} $\epsilon_{\rm cheb}$ sampling
\begin{align}
    \frac{Z(\beta,s)}{N} = \frac{{\rm Tr}\left(e^{-\beta \tilde{H}_p (s t) }\right)}{N}
\end{align}
at $ s_k= \cos\left(\frac{2k-1}{2M_{\rm cheb}} \pi\right)$, where $k \in \{1,2,\dots,M_{\rm cheb}\}$ and using the following linear combination:
\begin{align}
   Z(\beta,0) \approx \sum^{M_{\rm cheb}}_{k=1} d_k Z(\beta,s_k),
\end{align}
where
\begin{align}
    d_k &= \frac{1}{M_{\rm cheb}}(-1)^{k+M_{\rm cheb}/2} \tan\left(\frac{2k-1}{2M_{\rm cheb}}\pi\right),
\end{align}
and
\begin{align}
   M_{\rm cheb}=\tilde{O}\left(  \log\left(\frac{Z/N}{\epsilon_{\rm cheb}}\right)   + {\alpha }\right).
\end{align} 
\end{lemma}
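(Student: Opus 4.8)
The plan is to assemble the lemma from the chain of estimates built just above its statement, turning each into a quantitative bound. First I would invoke the Trotter error estimate (Lemma~12 of Ref.~\cite{rendon2023dequantizing}), which gives $\|L_p(\tau)\|=\|\tilde H_p(\tau)-H\|=O(|\tau|^p\alpha/(p+1)!)$ on a complex disc $|\tau|\le rt$ small enough that $\tilde H_p(\tau)=\log S_p(\tau)/(i\tau)$ is the single-valued analytic branch; this is where the constraints $t\le\pi$ and a controlled radius $r$ enter, since $\tau=zt$ with $z\in B_\rho$ must stay inside the convergence region of the product-formula logarithm. Analyticity of $\tilde H_p$ in $\tau$ on this disc immediately gives analyticity of $f(s)=Z(\beta,s)/N={\rm Tr}\,e^{-\beta\tilde H_p(st)}/N$ inside the Bernstein ellipse $B_\rho$, which is the hypothesis Lemma~\ref{lem:Berns} requires.

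Next I would bound $|f|$ on $B_\rho$ in terms of $Z/N$. Writing $\tilde H_p=H+L_p(\tau)$ and diagonalising $H$ with eigenvalues $E_\lambda$, the Zassenhaus expansion of $e^{-\beta\tilde H_p}$ lets one bound $\|e^{-\beta\tilde H_p}e^{\beta H}\|\le e^{\beta\|L_p(\tau)\|+O(\beta^2\|L_p\|\,\|H\|)+\cdots}$, and since $\|L_p\|\to 0$ with $t$ the nested-commutator corrections are subleading, so $|{\rm Tr}\,e^{-\beta\tilde H_p}|\le\sum_\lambda e^{-\beta E_\lambda}e^{\beta\|L_p(\tau)\|}$ up to constants. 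Taking the supremum over $z\in B_\rho$ and using $|\tau|\le rt$ yields $C:=\max_{z\in B_\rho}|f(z)|=O(e^{\alpha\beta(rt)^p/p!})\,(Z/N)$, with $\rho=r+\sqrt{r^2-1}$.

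Then I would feed $C$ into Lemma~\ref{lem:Berns} to get $\epsilon_{\rm cheb}\le 4C\rho^{-(M_{\rm cheb}-1)}/(\rho-1)=O(e^{\alpha\beta(rt)^p/p!}(Z/N)\,r^{-M_{\rm cheb}})$ since $\rho\ge r$, and solving this inequality for $M_{\rm cheb}$ gives $M_{\rm cheb}=\frac{1}{\log r}\,O(\log((Z/N)/\epsilon_{\rm cheb})+\beta\alpha(rt)^p/p!)$, precisely the intermediate display above the lemma. Substituting $p\sim\sqrt{\log_5\beta}$, $t\sim e^{-\sqrt{\log\beta\log 5}}$ and $r\sim e^{1/\sqrt{\log_5\beta}}$, one computes $r^p=\Theta(1)$, $t^p=\Theta(1/\beta)$ and $\log r=\Theta(1/\sqrt{\log_5\beta})$, so $\beta\alpha(rt)^p/p!=O(\alpha/p!)=O(\alpha)$ and the prefactor $1/\log r=\Theta(\sqrt{\log_5\beta})$ is polylogarithmic in $\beta$ and hence absorbed into the $\tilde O$; this yields $M_{\rm cheb}=\tilde O(\log((Z/N)/\epsilon_{\rm cheb})+\alpha)$. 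Finally, the reconstruction formula $Z(\beta,0)\approx\sum_{k=1}^{M_{\rm cheb}}d_kZ(\beta,s_k)$ with the stated $d_k$ is $P_{M_{\rm cheb}-1}f(0)$ rewritten through the discrete orthonormality of the $u_j$ at the Chebyshev nodes \eqref{eq:cheb_orthonorm_def}--\eqref{eq:cheb_node_def}, i.e.\ $d_k=\sum_{j=0}^{M_{\rm cheb}-1}u_j(0)u_j(s_k)$ evaluated in closed form.

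The step I expect to be the main obstacle is making the trace comparison rigorous on the complex ellipse rather than merely on $[-1,1]$: one must confirm that $\tilde H_p(\tau)$ is genuinely analytic and norm-close to $H$ throughout $B_\rho$, i.e.\ that $r$ can be chosen keeping the ellipse inside the logarithm's convergence disc, and that the higher Zassenhaus terms remain uniformly subleading for $\tau\in B_\rho$ so that the clean factor $e^{\beta\|L_p(\tau)\|}$ survives up to constants. Once this is secured, the scaling arithmetic showing $t^p=\Theta(1/\beta)$ and $r^p=\Theta(1)$ is routine but must be carried out carefully to land precisely on $\tilde O(\log((Z/N)/\epsilon_{\rm cheb})+\alpha)$.
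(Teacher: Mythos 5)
Your proposal reproduces the paper's own derivation essentially step for step: the Trotter error bound $\|L_p(\tau)\|=O(|\tau|^p\alpha/(p+1)!)$, the Zassenhaus-based trace comparison giving $\max_{z\in B_\rho}|f(z)|=O(e^{\alpha\beta(rt)^p/p!})Z/N$, the application of Lemma~\ref{lem:Berns}, the solve for $M_{\rm cheb}$, and the same parameter choices yielding $r^p=\Theta(1)$, $t^p=\Theta(1/\beta)$, $1/\log r=\Theta(\sqrt{\log_5\beta})$. The analyticity/uniformity caveat you flag on $B_\rho$ is a real gap that the paper also leaves implicit, so your account is faithful to (and no less rigorous than) the argument actually given.
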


The depth of the circuit to implement each $e^{-\beta \tilde{H}_p(t s_k)}$ (or $U_{{\rm boltz},s_k}$) scales as :
\begin{align}
    O\left(M_k \frac{5^p}{t |s_k|} \right),
\end{align}
where $M_k$ is the order of the polynomial to implement the operator $e^{-\beta \tilde{H}_p (ts_k)}$, the $5^p$ comes from the stages of the Suzuki-Trotter formula. Thus, in order to estimate each $Z(\beta,s_k)/N$ with a statistical uncertainty $\tilde{O}(\varepsilon)$ the quantum computational cost is (See ~\Cref{lem:ae_Nboltz}) :
\begin{align}
     O\left(M_k \frac{5^p}{t |s_k|} \frac{\sqrt{Z(\beta,s_k)}/N}{\varepsilon} \right).
\end{align}
Summing the cost for all $Z(\beta,s_k)/N$:
\begin{align}
 O\left(\sum^{M_{\rm cheb}}_{k=1}M_k \frac{5^p}{t |s_k|} \frac{\sqrt{Z(\beta,s_k)}/N}{\varepsilon} \right)
\end{align}
The total cost, then, is expected to scale as 
\begin{align}
    O\left(\frac{5^p}{t}\max_k \left({M_k} \frac{\sqrt{Z(\beta,s_k)}/N}{\varepsilon} \right)  M_{\rm cheb} \log M_{\rm cheb}\right).
\end{align}
where we used $\sum_k 1/|s_k| \sim O(M_{\rm cheb}\log M_{\rm cheb } ) $ as shown in \cite{Rendon2024improvedaccuracy}. 
From Lemma.~\ref{lemma:LowWeightAPX}, the order of the low-weight Fourier expansion scales as 
\begin{align}
    M_k = O(\beta \log{1/\epsilon_{\rm QSP}}),
\end{align}
where $\epsilon_{\rm QSP}$ is the target operator error for the implementation of $U_{{\rm boltz},{s_k}}$(which encodes $e^{-\beta \tilde{H}_p (ts_k)}$). With this, the cost scales as 
\begin{align}
    \tilde{O}\left(\frac{\beta 5^p}{t} \log\left(\frac{1}{\epsilon_{\rm QSP}}\right) M_{\rm cheb}\log{M_{\rm cheb }}\left(\frac{\sqrt{\max_{s} Z(\beta,s)/N}}{\varepsilon}\right)\right).
\end{align}
Now, choosing again like in \Cref{lem:Mcheb} the following scalings:
\begin{align}
    p \sim \sqrt{\log_5 \beta}, \cr 
    t \sim e^{- \sqrt{\log \beta \log 5}};
\end{align}
and using $M_{\rm cheb}=\tilde{O}\left(  \log\left(\frac{Z/N}{\epsilon_{\rm cheb}}\right)   + {\alpha }\right)$ from that same lemma, the cost scales in the following way
\begin{align}
    \tilde{O}\left(\beta e^{2 \sqrt{\log\beta \log 5}}\log\left(\frac{1}{\epsilon_{\rm QSP}}\right)  \left(\log\left( \frac{Z/N}{\epsilon_{\rm cheb}}\right)   + \alpha \right)\right.\cr 
    \left.\times \left(\frac{\sqrt{\max_{s} Z(\beta,s)/N}}{\varepsilon}\right)\right).
\end{align}
Ignoring the sub-polynomial factor $e^{2\sqrt{\log \beta \log 5}}$, which asymptotically grows slower than $\beta^a$ for any $a >0$, we see that the expected behavior is to all practical purposes linear in $\beta$. Moreover, we will set the errors $\epsilon_{QSP} \sim \epsilon_{\rm cheb}$, and given that $Z/N$ is expected to be very small we can ignore the scaling of $\log\left(\frac{Z/N}{\epsilon_{\rm cheb}}\right)$ as it scales logarithmically with the relative error and not the additive error. Finally, we can consider the cost being
\begin{align}
     \tilde{O}\left(\beta\max({\alpha,1}) \log{1/\epsilon}\left(\frac{\sqrt{\max_{s} Z(\beta,s)/N}}{\varepsilon}\right)\right).
\end{align}
Finally, our main results are condensed in the following theorem:
\begin{theorem}
    One can estimate 
\begin{align}
    \frac{Z(\beta)}{N} = \frac{{\rm Tr}\left(e^{-\beta H}\right)}{N}
\end{align}

for a Hamiltonian $H=\sum^{\Gamma}_\gamma H_\gamma$ for fast-forwardable $H_{\gamma}$ and $\| H \| \leq 1$ with \textbf{statistical error} $\varepsilon$ and \textbf{algorithmic error} $\epsilon$ by first estimating
\begin{align}
    \frac{Z(\beta,s_k)}{N} = \frac{{\rm Tr}\left(e^{-\beta \tilde{H}_p (s_k t) }\right)}{N}
\end{align}
for $ s_k= \cos\left(\frac{2k-1}{2M_{\rm cheb}} \pi\right)$, where $k \in \{1,2,\dots,M_{\rm cheb}\}$, each with a with \textbf{statistical error} $\tilde{O}(\varepsilon)$ and \textbf{algorithmic error} $\tilde{O}(\epsilon)$. Here $\tilde{H}_p(s t)$ is the effective Hamiltonian to the Trotterized evolution by a $p$th-order formula $S_{p} ( s t)$. With these, said estimate is obtained through the following linear combination:
\begin{align}
   Z(\beta,0) \approx \sum^{M_{\rm cheb}}_{k=1} d_k Z(\beta,s_k),
\end{align}
where
\begin{align}
    d_k &= \frac{1}{M_{\rm cheb}}(-1)^{k+M_{\rm cheb}/2} \tan\left(\frac{2k-1}{2M_{\rm cheb}}\pi\right).
\end{align}
This, with a cost
\begin{align*}
     \tilde{O}\left(\beta \max\left(\alpha,1\right) \log{1/\epsilon}\left(\frac{\sqrt{\max_{s} Z(\beta,s)/N}}{\varepsilon}\right)\right)
\end{align*}
provided we set the Trotter-step interval to have scale of $t \sim e^{-\sqrt{\log\beta \log 5}}$ and the product formula order to scale like  $p \sim \sqrt{\log_5 \beta}$.
\end{theorem}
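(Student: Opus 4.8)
The plan is to assemble the final theorem essentially as a corollary of the lemmas already established, chaining together the three error/cost contributions (the GQSP implementation of the Boltzmann factor, the amplitude-estimation of each $Z(\beta,s_k)/N$, and the Chebyshev interpolation toward $s\to 0$) and then substituting the stated asymptotic choices for $p$, $t$, and $r$. First I would invoke \Cref{lem:Mcheb} to fix the interpolation scheme: the nodes $s_k=\cos\!\left(\frac{2k-1}{2M_{\rm cheb}}\pi\right)$, the coefficients $d_k$, and the bound $M_{\rm cheb}=\tilde O\!\left(\log(Z/N\epsilon_{\rm cheb})+\alpha\right)$ that guarantees $|Z(\beta,0)/N-\sum_k d_k Z(\beta,s_k)/N|\le\epsilon_{\rm cheb}$, using the analyticity of $f(s)=Z(\beta,s)/N$ on a Bernstein ellipse $B_\rho$ together with the trace bound $\max_{z\in B_\rho}|\Tr(e^{-\beta\tilde H_p})/N|=O(e^{\beta(rt)^p\alpha/p!})\,Z/N$ derived via the Zassenhaus expansion and \Cref{lem:Berns}.

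Next I would account for the per-node cost. For each $s_k$, \Cref{lem:ae_Nboltz} gives that estimating $Z(\beta,s_k)/N$ to statistical error $\varepsilon$ costs $O\!\left(\sqrt{Z(\beta,s_k)/N}/\varepsilon\right)$ queries to $U_{{\rm boltz},s_k}$, and the preceding lemma gives that $U_{{\rm boltz},s_k}$ is implemented to operator error $\epsilon_{\rm QSP}$ using $M_k=O(\beta\log(1/\epsilon_{\rm QSP}))$ applications of $S_p^{\lceil 1/(s_kt)\rceil}(s_kt)$, each of which has depth $O(5^p/(t|s_k|))$ from the Suzuki--Trotter recursion \eqref{eq:symm_S2k}. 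Multiplying these and summing over $k$, and using $\sum_k 1/|s_k|\sim O(M_{\rm cheb}\log M_{\rm cheb})$ from \cite{Rendon2024improvedaccuracy}, yields the total cost
\begin{align*}
\tilde O\!\left(\frac{\beta\,5^p}{t}\log\frac{1}{\epsilon_{\rm QSP}}\,M_{\rm cheb}\log M_{\rm cheb}\left(\frac{\sqrt{\max_s Z(\beta,s)/N}}{\varepsilon}\right)\right).
\end{align*}

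Then I would substitute $p\sim\sqrt{\log_5\beta}$ and $t\sim e^{-\sqrt{\log\beta\log5}}$, so that $5^p/t\sim e^{2\sqrt{\log\beta\log5}}$, which is sub-polynomial in $\beta$ (it grows slower than $\beta^a$ for every $a>0$), and use the $M_{\rm cheb}$ bound from \Cref{lem:Mcheb}. Setting $\epsilon_{\rm QSP}\sim\epsilon_{\rm cheb}=:\epsilon$ and absorbing the benign $\log(Z/N\epsilon)$ factor (logarithmic in the \emph{relative} error) gives the advertised $\tilde O\!\left(\beta\max(\alpha,1)\log(1/\epsilon)\,\sqrt{\max_s Z(\beta,s)/N}\,/\varepsilon\right)$, with the qubit count $2\log_2 N+2$ inherited from \Cref{lem:ae_Nboltz}. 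The final bookkeeping step is to confirm that the algorithmic error splits cleanly: the interpolation contributes $\epsilon_{\rm cheb}$ directly, while each per-node operator error $\epsilon_{\rm QSP}$ propagates through $\sum_k d_k(\cdot)$ with $\sum_k|d_k|=\tilde O(1)$, so the combined algorithmic error is $\tilde O(\epsilon)$ as claimed.

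The main obstacle I anticipate is controlling the analytic continuation carefully enough: one must verify that $\tilde H_p(zt)$ remains a single-valued analytic operator-valued function on the disc of radius $r$ (equivalently that $zt$ stays within the branch-cut-free region $|zt|\le\pi$ of $\log S_p$), that the Zassenhaus/Bernstein bound on $\max_{z\in B_\rho}|\Tr(e^{-\beta\tilde H_p})/N|$ holds uniformly with the constant $\alpha$ from the Trotter error estimate $\|\tilde H_p(\tau)-H\|=O(|\tau|^p\alpha/(p+1)!)$ of \cite{rendon2023dequantizing}, and that the chosen scaling $r\sim e^{1/\sqrt{\log_5\beta}}$ is simultaneously large enough to make $r^{M_{\rm cheb}}$ suppress the interpolation error and small enough that $(rt)^p$ stays bounded so the trace blow-up factor $e^{\beta(rt)^p\alpha/p!}$ does not spoil the logarithmic-in-$\epsilon$ scaling — this three-way balance is exactly what forces the particular sub-polynomial choices of $p$ and $t$, and checking it is the delicate part; everything else is assembly of prior lemmas.
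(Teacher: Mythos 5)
Your proposal is correct and follows essentially the same route as the paper: the theorem is assembled exactly as you describe, by chaining the GQSP implementation lemma, Lemma~\ref{lem:ae_Nboltz}, and Lemma~\ref{lem:Mcheb}, summing the per-node costs with $\sum_k 1/|s_k| = O(M_{\rm cheb}\log M_{\rm cheb})$, and then substituting $p\sim\sqrt{\log_5\beta}$, $t\sim e^{-\sqrt{\log\beta\log 5}}$ and setting $\epsilon_{\rm QSP}\sim\epsilon_{\rm cheb}$. Your added bookkeeping on error propagation through $\sum_k d_k(\cdot)$ and your flagged concerns about the analyticity region and the three-way balance of $r$, $p$, $t$ are sensible refinements of, not departures from, the paper's argument.
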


\section{Numerical Results} \label{sec:numerical_results}
We consider the simulation as shown in Figure.~\ref{fig:GQSP_circuit} with the SYK model
\begin{equation}
    H_{xxz} =  \frac{1}{4*4!} \sum^{n_{\rm Majorana}}_{i,j,k,l = 1} J_{ijkl} \gamma_i \gamma_j \gamma_k \gamma_l
\end{equation} 
where $\gamma_i$ are Majorana operators with anti-commutator relation $\{\gamma_i, \gamma_j\} = \delta_{ij}$. The model could be mapped to a spin model on $n_{\rm Majorana}/2$ qubits with the standard Jordan-Wigner transformation \cite{Luo_2019, Garc_a_lvarez_2017, Asaduzzaman_2024, Babbush_2019}
\begin{align}
    \gamma_{2k-1} = \frac{1}{\sqrt{2}}\Bigg(\prod_{j=1}^{k-1}Z_j\Bigg) X_k \,\,\mathbb{I}^{\bigotimes (n_{\rm Majorana}-2k)/2} \\
        \gamma_{2k} = \frac{1}{\sqrt{2}}\Bigg(\prod_{j=1}^{k-1}Z_j\Bigg) Y_k \,\,\mathbb{I}^{\bigotimes (n_{\rm Majorana}-2k)/2} 
\end{align}
The straightforward operator counts show that the number of Pauli strings and the stages scale as $\mathcal{O}(n_{\rm Majorana}^4)$ from the number of terms of $n_{\rm Majorana} \choose 4$. The simulation cost could be reduced by regrouping the commuting terms into the forms of \cite{Asaduzzaman_2024}
\begin{equation}
    H = \sum_i H_i,  \quad [H_i, H_j]\neq 0 \quad \text{for} \quad i\neq q
\end{equation}
with 
\begin{equation}
    H_i = \sum_j \alpha_j \prod_k \sigma_k^j,  \quad \Big[\prod_k \sigma_k^j, \prod_l \sigma_l^m\Big] = 0 \quad \forall m,j
\end{equation}

Moreover, as a numerical confirmation of the convergence rates from before, we obtain the errors for a lower-weight Fourier expansion and the Taylor expansion of the function $f(H) = e^{-\beta(H+1)}$. 
For the LWF expansion, the target polynomial for the Trotterized operator $S_p^{\lceil 1/s_kt \rceil}(s_k t)$ could be written as $P = \sum c_m S_p^{m\lceil 1/s_kt \rceil}(s_k t)$ as shown in Eq.~\ref{eq:fourier_mapprox}. 
With the polynomials defined, one can implement the simulation as shown above in Fig.~\ref{fig:GQSP_circuit}, and the results of the error from purely truncating the polynomials are shown in Fig.\ref{fig:gqsp_result}.  To find the phases for \ref{fig:GQSP_circuit} one just needs to numerically solve for them in  \cref{eq:phases_condition}. In \cite{motlagh2024generalizedquantumsignalprocessing}
authors provide an equation and an efficient package to solve for these phases given the polynomial coefficients.

Our results show that with the GQSP procedure, the Gibbs preparation could be simulated as expected with controlled accuracy. We linearly fit the numerical results and found that the order $M$ of expansion scales linearly with $M\sim\log(1/\epsilon)$. 
\begin{figure}[h!]
    \centering    
    \includegraphics[width=0.49\textwidth]{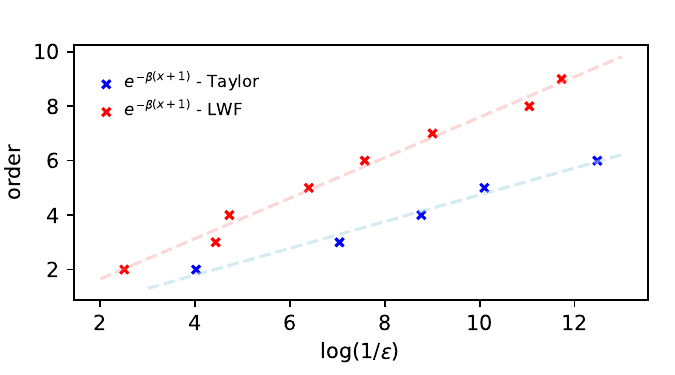}
\caption{GQSP of the function $e^{-\beta(x+1)}$ implemented using Taylor expansion and the low-weight Fourier expansion in Eq.~\ref{eq:fourier_mapprox}. The dashed lines are linear fits. }
    \label{fig:gqsp_result}
\end{figure}
The gates cost scales linearly with the order of the polynomial. Therefore, we observed that the cost is proportional to $\tilde{O}\left(\log{1/\epsilon}\right).$ as predicted from analytical results. 

Further, we show the results of the qubits saved using the interpolation method compared with the quantum walk operator method. As shown in Ref.~\cite{Low_2019}, the ancillary qubits needs for the qubitization is in the order of $O(\log_2(\Gamma))$ where $\Gamma$ is the number of stages for the Hamiltonian scales as $O(n_{\rm Majorana}^4)$. In comparison, our method of interpolation only requires 1 ancillary qubit using the GQSP. Therefore, we save the qubits extensively in the order of $O(\log_2(\Gamma))$, and it is shown in Fig~\ref{fig:qubits_saved}.
\begin{figure}[h!]
    \centering    
    \includegraphics[width=0.49\textwidth]{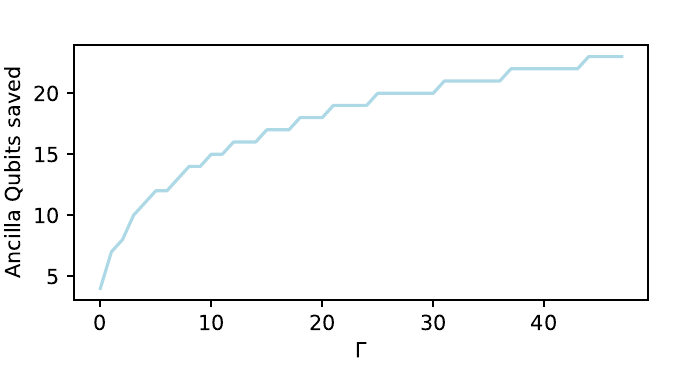}
\caption{ancillary qubits saved compared with the qubitization method from Ref.~\cite{Low_2019} or other quantum signal processing methods that rely on a quantum walk operator/block-encoding~\cite{Gily_n_2019}}
    \label{fig:qubits_saved}
\end{figure}

\section{Conclusions} \label{sec:conclusions}

With the methods presented here, we have reduced the number of qubits required for Gibbs state preparation while preserving the asymptotic scaling, $\tilde{O}(\beta \log{1/\epsilon})$, of state-of-the-art methods.
The numerical implementation of Gibbs state preparation for the SYK model was benchmarked and the results fit the analytical derivation. 

We have used a new convergence rate estimation for the Trotter interpolation using Chebyshev polynomials/nodes using the trace instead of the spectral norm, which gets us a $\log{1/\epsilon}$ scaling instead of $\log^2 1/\epsilon$. 
Moreover, the scalings chosen for $t$, the order of the product formula $p$, and the bounding radius $r$, have allowed us to obtain a quasi-linear scaling with $\beta$.

We have swapped the quantum walk operator for the Trotterized evolution operator and performed the quantum eigenvalue transformation on the effective Hamiltonian $\tilde{H}_{p}$, instead of the typical transformed eigenphase $\arccos{(H)}$. 
After that, Chebyshev interpolation was used to extrapolate the partition function for the effective Hamiltonian to the partition function at Trotter-step size zero. With the bypassing of the transformation on the quantum walk operator, we have saved $\log{\Gamma}$ qubits as well as avoided the more complicated controlled operations required by block-encoding.

In future work, we would hope to have more detailed resource estimates in terms of gate counts and comparisons to other state-of-the-art methods. We would also like to extend the resource estimates to other types of systems with relatively fast growth of stages $\Gamma$ which benefit the most from this method.

\begin{acknowledgments}

We would like to thank Dr. Sarvagya Upadhyay of Fujitsu Research of America, Inc., for his valuable feedback on the manuscript. We are also deeply grateful to Dr. Hirotaka Oshima and Dr. Yasuhiro Endo of Fujitsu Research of Japan, Inc., for their insightful comments and suggestions throughout the course of the project.

\end{acknowledgments}
\bibliography{refs}

\end{document}